\def\BibTeX{{\rm B\kern-.05em{\sc i\kern-.025em b}\kern-.08em
    T\kern-.1667em\lower.7ex\hbox{E}\kern-.125emX}}
\newtheorem{theorem}{Theorem}
\newtheorem{corollary}{Corollary}
\newtheorem{lemma}{Lemma}
\newtheorem{definition}{Definition}
\begin{document}
\title{
{\fontsize{21pt}{1pt}\selectfont
Matrix Concentration Inequalities for Sensor Selection}
}

\author{Christopher I. Calle and  Shaunak D. Bopardikar
\thanks{The authors are with the Department of Electrical and Computer Engineering, Michigan State University (MSU), East Lansing, MI 48824, USA. Emails: \texttt{callechr@msu.edu, shaunak@egr.msu.edu} }
\thanks{This work was supported in part by NSF Grant \# ECCS-2030556, GAANN Grant \# P200A180025, the National GEM Fellowship, and MSU's University Enrichment Fellowship (UEF).}
}

\maketitle

\begin{abstract}
In this work, we address the problem of sensor selection for state estimation via Kalman filtering. We consider a linear time-invariant (LTI) dynamical system subject to process and measurement noise, where the sensors we use to perform state estimation are randomly drawn according to a sampling with replacement policy. Since our selection of sensors is randomly chosen, the estimation error covariance of the Kalman filter is also a stochastic quantity. Fortunately, concentration inequalities (CIs) for the estimation error covariance allow us to gauge the estimation performance we intend to achieve when our sensors are randomly drawn with replacement. To obtain a non-trivial improvement on existing CIs for the estimation error covariance, we first present novel matrix CIs for a sum of independent and identically-distributed (i.i.d.) and positive semi-definite (p.s.d.) random matrices, whose support is finite. Next, we show that our guarantees generalize an existing matrix CI. Also, we show that our generalized guarantees require significantly fewer number of sampled sensors to be applicable. Lastly, we show through a numerical study that our guarantees significantly outperform existing ones in terms of their ability to bound (in the semi-definite sense) the steady-state estimation error covariance of the Kalman filter.
\end{abstract}

\begin{keywords}
Concentration inequalities, Kalman filtering, Random matrix theory, Sensor selection
\end{keywords}

\section{Introduction}
\label{section:introduction}

The selection of sensors for state estimation is a mature subject in the literature. Typically, the sensors under consideration, referred to as candidate sensors in this work, are distinct and chosen without replacement. However, it turns out that choosing sensors with replacement is also useful for approximating several scenarios of interest. In the area of state estimation for dynamical systems, sampling a sensor with replacement can be interpreted in at least two different ways. Before we describe the two types of interpretations and their practical relevance, we first clarify that sampling with replacement is a type of sampling policy that draws each sensor at random and \textit{with replacement} from a pool of candidate sensors via some specified probability distribution. We formally introduce the sampling policy in Section~\ref{subsection:sampling_policy}.

In the first interpretation, sampling a candidate sensor with replacement equates to acquiring multiple measurements from that same sensor at a specific time instant $t$. Though acquiring these types of measurements is not physically possible, this interpretation is still useful in approximating a practical scenario of interest. For example, if the rate at which a sensor can perform measurements is significantly greater than the rate at which the dynamical system under consideration changes, then sampling that candidate sensor multiple times approximately equates to the act of acquiring multiple measurements from that one sensor at distinct time instances (that neighbor the nominal time instant $t$).

In the second interpretation, sampling a candidate sensor with replacement equates to sampling without replacement from a redundant sensor network (RSN). We refer to a pool of candidate sensors, where multiple copies of that same sensor exist, as an RSN. However, it may not always be possible for a sampling pool to possess multiple copies of each sensor. If the candidate sensors of the original sampling pool can be allocated into groups of similar sensors, then we refer to the sampling pool as an approximate RSN. As a consequence of the second interpretation, sampling with replacement approximately equates to sampling without replacement from an approximate RSN.

If either interpretation applies to a sensor selection problem for state estimation, then  it was shown in \cite{10193790} that concentration inequalities (CIs) are ideal tools for bounding the state estimation error covariance of the Kalman filter applied to a linear time-invariant (LTI) dynamical model.


Historically, the guarantees of CIs were only applicable to sums of scalar-valued random variables. Fortunately, the influential work by Ahlswede and Winter (AW) in \cite{ahlswede2002strong} allowed for the natural extension of CIs to sums of matrix-valued random variables. Comprehensive works, such as \cite{tropp2015introduction}, typically express CIs for the spectral norm of a sum of random matrices. However, as shown in \cite{qiu2014cognitive} and \cite{notes13}, CIs that bound (in the semi-definite sense) a sum of random matrices also exist. In this work, we refer to the latter type of CIs as matrix CIs since they bound the spectrum of a sum instead of just the spectral norm of a sum. The ability to provide probabilistic guarantees for the entire spectrum of a sum of random matrices is one appealing feature of matrix CIs. It is these types of CIs that \cite{10193790} used to capture the spectrum of the estimation error covariance of the Kalman filter.


The use of matrix CIs in the field of control theory is a relatively unexplored topic. Only a handful of works, such as \cite{bopardikar2021randomized,10193790,siami2020deterministic,amini2022space}, have recognized the utility of sums of random variables to address initial state estimation, actuator scheduling, and state estimation. The reason why works, such as \cite{bopardikar2021randomized} and \cite{10193790}, could apply matrix CIs to the problem of state estimation was due to their focus on sampling \textit{with replacement} policies. 


As mentioned earlier, the type of sampling we employ in this work is in contrast to mature and contemporary works, such as \cite{clark2016submodularity,zhang2017sensor,hashemi2021randomized}, which do not assume a sensor can be drawn more than once and rely on greedy sampling policies to choose an appropriate selection of sensors for state estimation. Though a greedy algorithm can provide assurances on the quality of the selection relative to the optimal selection, the utility of such assurances ultimately depends on the submodularity of the problem. By focusing on sampling \textit{with replacement} policies, the work \cite{10193790} was able to provide useful guarantees without regard to the submodularity of the problem. This is yet another reason why matrix CIs for the state estimation error covariance (of the Kalman filter) are of great interest.

In this work, we focus on generalizing an existing AW-based matrix CI, referred to as Lemma~\ref{lemma:AW_inequality} in this work, to improve upon existing bounds for the estimation error covariance of the Kalman filter. This work specifically explores the extent to which we can outperform the existing concentration-based guarantees of \cite{10193790}. Though the work \cite{10193790} also explores how to choose an optimal sampling policy, we omit this discussion in this work. However, we do leverage the optimal sampling policy to help compare our guarantees with those of \cite{10193790}. The goal of this work is to show how our generalization of an existing CI can lead to tighter bounds for the estimation error covariance of the Kalman filter.

{\textit{Contributions:}}
Our main contributions are three-fold. First, we derive novel matrix CIs for a sum of independent and identically-distributed (i.i.d.) and positive semi-definite (p.s.d.) random matrices. We show that our CIs are a generalization of an existing AW-based matrix CI. Second, we show that our generalized guarantees require significantly fewer number of sampled sensors. Third, we apply our guarantees to the problem of sensor selection for state estimation. We compare our CIs for the steady-state error covariance of the Kalman filter against the existing guarantees of \cite{10193790}, and we empirically show that our CIs outperform them.

\section{Preliminaries}
\label{section:problem_formulation}

\subsection{Notation}
\label{subsection:notation}

We denote $\mathbb{R}$ and $\mathbb{N}$ as the set of real and natural numbers, respectively, and $I_n$ and $0_n$ as the identity and zero matrix of order $n \times n$, respectively. Also, we denote $\bar{0}_n$ as a column vector with $n$ zero entries. Furthermore, we denote $\overline{\lambda} ( \cdot )$ as the maximum eigenvalue of a square matrix argument, $\Delta^{n}$ as the probability simplex in $\mathbb{R}^n$, and $\| \cdot \|$ as the spectral norm of a matrix argument. We refer to $\mathbb{S}^{n}$, $\mathbb{S}_{+}^{n}$, and $\mathbb{S}_{++}^{n}$ as the set of symmetric, positive semi-definite (p.s.d.), and positive definite (p.d.) matrices of order $n \times n$, respectively. The operators $\succeq$ and $\succ$ apply to any matrices $A, B \in \mathbb{S}^{n}$, where the following inequalities, $A \succeq B$ and $A \succ B$, hold if $A-B \in \mathbb{S}_{+}^{n}$ and $A-B \in \mathbb{S}_{++}^{n}$, respectively. The former and latter inequalities are said to hold in the semi-definite and definite sense, respectively. Also, we denote $A^{+}$ as the pseudo-inverse of matrix $A \in \mathbb{R}^{n \times m}$. The notation $[n]$ is shorthand for the set $\{ 1, 2, \ldots , n \}$.

\subsection{Statistical Properties}
\label{subsection:properties}

In this work, we focus on CIs that probabilistically bound (in the semi-definite sense) a sum of p.s.d. random matrices. We denote $\boldsymbol{Z}_i$ as the $i$-th copy of the matrix-valued random variable $\boldsymbol{Z} \in \mathbb{S}_{+}^{d}$ and $( \boldsymbol{Z}_i )_{i \in [n]}$ as a sequence of $n$ i.i.d. and p.s.d. random matrices. Throughout this paper, we designate the variable $d \in \mathbb{N}$ to refer to the dimension of the p.s.d. random matrix $\boldsymbol{Z}$. Also, we denote $\eta \in \mathbb{N}$ as the number of unique realizations of $\boldsymbol{Z}$, $\mathcal{Z}_i \in \mathbb{S}_{+}^{d}$ as the $i$-th realization of $\boldsymbol{Z}$, and $\mathsf{S}_z \hspace{-0.25mm} := \hspace{-0.25mm} \{ \mathcal{Z}_i \}_{\, i \in [ \eta ]}$ as the support of $\boldsymbol{Z}$. Lastly, we denote $p \in \Delta^{\eta}$ as the probability distribution of $\boldsymbol{Z}$. Observe that the expectation $\mathbb{E}[ \boldsymbol{Z} ]$ is equivalent to ${\textstyle\sum\nolimits_{\, i \in [ \eta ]}} \, p_i \hspace{0.25mm} \mathcal{Z}_i$.

\subsection{Sampling Policy}
\label{subsection:sampling_policy}

We consider a sampling pool of $\eta$ candidate sensors. Each candidate sensor is assigned an integer that ranges from 1 to $\eta$. Without loss of generality, the $i$-th candidate sensor is assigned the integer $i$ and identified by it. Our sampling policy consists of $\gamma$ independent draws from a categorical random variable, whose probability distribution is $p$ and support is [$\eta$]. Each sample is stored as an entry in the sequence $\mathcal{S} \in [\eta]^{\gamma}$, i.e., the entry $\mathcal{S}_i$ is the realization of the $i$-th independent draw from the categorical random variable. As a consequence, the selection $\mathcal{S}$ of sensors is also a stochastic quantity. 

We observe that the sampling distribution $p$ is sufficient for specifying our sampling policy. As a consequence, we use the following terms, sampling distribution and sampling policy, interchangeably. Also, since each candidate sensor can to be drawn multiple times, we say that our policy is a sampling \textit{with replacement} policy. Also, we use a selection $\mathcal{S}$ of sensors to sample a sequence $( \boldsymbol{Z}_i )_{i \in [\gamma]}$ of random variables, where $\boldsymbol{Z}_i := \{ \, \mathcal{Z}_j : j = \mathcal{S}_i \, \}$ for all $i \in [\gamma]$.

\subsection{Dynamical System Model}

We consider a linear time-invariant (LTI) model corrupted by process and measurement noise, i.e., 
\begin{align}
\label{eqn:system_model_lti}
\begin{split}
x_{(t+1)} &= A \, x_{(t)} + w_{(t)},   \\
y_{(t)} &= C_{\mathcal{S}} \, x_{(t)} + v_{\mathcal{S},(t)},
\end{split}
\end{align}
where $x_{(t)} \in \mathbb{R}^{d}$ and $y_{(t)} \in \mathbb{R}^{\gamma}$ denote the state and output vector at time instant $t$, respectively. In this work, we also designate $d$ as the state dimension of \eqref{eqn:system_model_lti}. We denote $A \in \mathbb{R}^{d \times d}$ as the state matrix and $C_{\mathcal{S}} \in \mathbb{R}^{\gamma \times d}$ as the output matrix. We also denote $w_{(t)} \in \mathbb{R}^{d}$ and $v_{\mathcal{S},(t)} \in \mathbb{R}^{\gamma}$ as the process and measurement noise, respectively. We specify the quantities $\hat{x}_{f,(-1)} \in \mathbb{R}^{d}$ and $P_{\mathcal{S},(-1)} \in \mathbb{S}_{+}^{d}$. Also, we assume $x_{(0)} \sim \mathcal{N}( \hat{x}_{f,(-1)},P_{\mathcal{S},(-1)})$, $w_{(t)} \sim \mathcal{N}(\bar{0}_d,Q)$, and $v_{\mathcal{S},(t)} \sim \mathcal{N}(\bar{0}_d,R_{\mathcal{S}})$ are Gaussian random variables. We refer to $Q$ and $R_{\mathcal{S}}$ as the process and measurement noise covariance matrices, respectively. Lastly, we assume $x_{(0)}$, $\{ w_{(t)} \}_{t=0}^{\infty}$, and $\{ v_{\mathcal{S},(t)} \}_{t=0}^{\infty}$ are mutually uncorrelated.

Next, we characterize a candidate sensor in the context of the LTI system \eqref{eqn:system_model_lti}. Observe that the $i$-th candidate sensor is uniquely identified by the pair $(\bm{c}_{i},\bm{\sigma}_{i}^{2})$, where $\bm{c}_{i} \in \mathbb{R}^{d}$ maps the state to the uncorrupted output and $\bm{\sigma}_{i}^{2} > 0$ is the variance of the Gaussian noise corrupting the output of the $i$-th candidate sensor.

Next, we show how the output matrix $C_{\mathcal{S}} \in \mathbb{R}^{\gamma \times d}$ and measurement noise covariance $R_{\mathcal{S}} \in \mathbb{S}_{++}^{\gamma}$ are constructed using the selection $\mathcal{S}$ of sensors. We denote $c_{i}^{T}$ as the $i$-th row of $C_{\mathcal{S}}$ and $\sigma_{i}^{2}$ as the $i$-th diagonal element of $R_{\mathcal{S}}$. Since each sensor of LTI system~\eqref{eqn:system_model_lti} is independent of each other, the matrix $R_{\mathcal{S}}$ is diagonal. We assign the $i$-th sensor of \eqref{eqn:system_model_lti} using to the selection~$\mathcal{S}$ of sensors, i.e., 
\begin{align*}
( c_{i} , \sigma_{i}^{2} ) := \{ \, (\bm{c}_{j},\bm{\sigma}_{j}^2) : j = \mathcal{S}_{i} \, \}, \ \forall i \in [\gamma].
\end{align*}

\subsection{Kalman filtering}
\label{subsection:kalman_filtering}

We use the Kalman filter to compute an estimate $\hat{x}_{(t)}$ of the actual state $x_{(t)}$ of \eqref{eqn:system_model_lti}. When no measurement $y_{(t)}$ is available at time instant $t$, we designate $\hat{x}_{p,(t)}$ as the state estimate of $x_{(t)}$ and $\Sigma_{\mathcal{S},(t)}$ as the predicted covariance matrix of the state estimation error $( \hat{x}_{p,(t)} - x_{(t)} )$. When a measurement $y_{(t)}$ is available  at time instant $t$, we designate $\hat{x}_{f,(t)}$ as the state estimate of $x_{(t)}$ and $P_{\mathcal{S},(t)}$ as the filtered covariance matrix of the state estimation error $( \hat{x}_{f,(t)} - x_{(t)} )$. Throughout this work, we assume a measurement $y_{(t)}$ is available at each time instant. We use the following equations to update the predicted and filtered covariance matrices,
\begin{gather*}
\Sigma_{\mathcal{S},(t)} = A P_{\mathcal{S},(t-1)} A^{T} + Q, \\
P_{\mathcal{S},(t)} = \Sigma_{\mathcal{S},(t)} \hspace{-0.5mm} - \hspace{-0.5mm} \Sigma_{\mathcal{S},(t)} C_{\mathcal{S}}^{T} ( R_{\mathcal{S}} \hspace{-0.25mm} + \hspace{-0.25mm} C_{\mathcal{S}} \Sigma_{\mathcal{S},(t)} C_{\mathcal{S}}^{T} )^{-1}  C_{\mathcal{S}} \Sigma_{\mathcal{S},(t)},
\end{gather*}
for all $t \geq 0$. If $Q$ and $R_{\mathcal{S}}$ are p.d. matrices, then the above equations can be compactly stated in terms of the filtered error covariance at two neighboring time instances, i.e.,
\begin{align}
\label{eqn:P_filtered}
P_{\mathcal{S},(t+1)} = ( \, ( \, A P_{\mathcal{S},(t)} A^T + Q \, )^{-1} + C_{\mathcal{S}}^{T} R_{\mathcal{S}}^{-1} C_{\mathcal{S}} \, )^{-1}
\end{align}
for all $t \geq -1$. Also, if the pairs $(A,Q^{1/2})$ and $(A,C_{\mathcal{S}})$ are stabilizable and detectable, respectively, then we define the unique p.d. steady-state solution of the recursive covariance equation \eqref{eqn:P_filtered} as $P_{\mathcal{S}} := \lim_{t \rightarrow \infty} P_{\mathcal{S},(t)}$
given that the sequence starts at $P_{\mathcal{S},(-1)} \in \mathbb{S}_{+}^{d}$. Throughout this work, we assume $Q$ is a p.d. matrix, i.e., $Q \in \mathbb{S}_{++}^{d}$. Also, observe that $R_{\mathcal{S}}$ is a diagonal and p.d. matrix, i.e., $R_{\mathcal{S}} \in \mathbb{S}_{++}^{\gamma}$, since each sensor in \eqref{eqn:system_model_lti} is independent of each other and has non-zero measurement noise variance.

Next, we show that the product $C_{\mathcal{S}}^{T} R_{\mathcal{S}}^{-1} C_{\mathcal{S}}$ in the recursive equation \eqref{eqn:P_filtered} can be expressed as a finite sum of p.s.d. random matrices. First, we construct $\boldsymbol{Z}_i := \sigma_{i}^{-2} c_{i} c_{i}^{T}$ for all $i \in [\gamma]$ and $\mathcal{Z}_i := \bm{\sigma}_{i}^{-2} \bm{c}_{i} \bm{c}_{i}^{T}$ for all $i \in [\eta]$. Second, observe that $\boldsymbol{Z}_i$ is a p.s.d. random matrix since the sequence $\mathcal{S}$ is randomly drawn via the sampling policy in Section~\ref{subsection:sampling_policy}. Third, observe that the definitions of $C_{\mathcal{S}}$ and $R_{\mathcal{S}}$ imply the following,
\begin{align*}
C_{\mathcal{S}}^{T} R_{\mathcal{S}}^{-1} C_{\mathcal{S}} = \textstyle\sum\nolimits_{\hspace{0.25mm} i \in [\gamma]} \boldsymbol{Z}_{i} = \textstyle\sum\nolimits_{\hspace{0.25mm} i \in \mathcal{S}} \mathcal{Z}_{i}.
\end{align*}

Lastly, we introduce the matrix function $f$ in Definition~\ref{def:f} to compactly state equations throughout this work.
\begin{definition}
\label{def:f}
Given $\Lambda \in \mathbb{S}_{++}^{d}$ and $\Xi \in \mathbb{S}_{+}^{d}$, we define~the~matrix function $f( \Lambda , \Xi ) := ( \, ( \, A \Lambda A^T + Q \, )^{-1} + \Xi \, )^{-1}$.
\end{definition}

\subsection{Problem Statement}

In this work, we derive matrix CIs to bound (in the semi-definite sense) the estimation error covariance of the Kalman filter when applied to the LTI system~\eqref{eqn:system_model_lti}. We focus on the steady-state solution of the estimation error covariance. In \cite{10193790}, an existing CI (referred to as Lemma~\ref{lemma:AW_inequality} in this work) is used to derive a CI for the steady-state error covariance (referred to as Lemma~\ref{lemma:SS_bounds} in this work). The problem we seek to answer is whether the matrix CI for the steady-state error covariance of the Kalman filter can be substantially improved upon by generalizing the guarantees of Lemma~\ref{lemma:AW_inequality}.

\section{Main Results}
\label{section:main_results}

\subsection{Matrix Concentration Inequalities}
\label{subsection:matrix_CIs}

In this section, we present matrix CIs that probabilistically bound (in the semi-definite sense) a sum of i.i.d. and p.s.d. random matrices. Observe that Lemma~\ref{lemma:AW_inequality} and Theorem~\ref{theorem:inequality_Z} have a few quantities in common, such as $d$, $\gamma$, and $p$. This allows us to properly compare them in our analysis. We refer the reader to Section~\ref{subsection:sampling_policy} on how to obtain a realization for the sequence $( \boldsymbol{Z}_i )_{i \in [\gamma]}$ of random variables. Also, we refer to the quantities that upper or lower bound the sum ${\textstyle\sum\nolimits_{\, i \in [ \gamma ]}} \boldsymbol{Z}_i$ as semi-definite bounds.

First, we introduce the matrix CI in Lemma~\ref{lemma:AW_inequality}, the CI that our guarantees in \cite{10193790} are based off of. Note that Lemma~\ref{lemma:AW_inequality} is a reformulation of Corollary~2 in \cite{notes13}.

\begin{lemma}
\label{lemma:AW_inequality}
Let $( \boldsymbol{Z}_i )_{i \in [\gamma]}$ denote a sequence of $\gamma$ i.i.d. and p.s.d. random matrices. Let the tuple $\bar{\mathcal{T}} \hspace{-0.25mm} := \hspace{-0.25mm} ( d, \bar{\delta}, \gamma, \bar{\rho}, \bar{\epsilon}, p )$ of~parameters satisfy the inequality 
\begin{align}
\label{eqn:inequality_R_bar}
\boldsymbol{Z} \preceq \bar{\rho} \, \mathbb{E}[ \boldsymbol{Z} ]
\end{align}
almost surely for the scalar $\bar{\rho} \geq 1$ and the equality
\begin{align}
\label{eqn:equality}
\gamma \, \bar{\epsilon}^2 / \bar{\rho} = 4 \hspace{0.25mm} \log_{e}{( 2d / \bar{\delta} )}
\end{align}
for the scalars $\bar{\delta}, \bar{\epsilon} \in (0,1)$. Then, the event
\begin{align}
\label{eqn:event_00}
\left\{ \, ( \hspace{0.25mm} 1 - \bar{\epsilon} \hspace{0.25mm} ) \, \gamma \, \mathbb{E}[ \boldsymbol{Z} ] \preceq \textstyle\sum\nolimits_{\, i \in [ \gamma ]} \boldsymbol{Z}_i \preceq ( \hspace{0.25mm} 1 + \bar{\epsilon} \hspace{0.25mm} ) \, \gamma \, \mathbb{E}[ \boldsymbol{Z} ] \, \right\}
\end{align}
occurs with probability at least $(\hspace{0.25mm} 1 - \bar{\delta} \hspace{0.25mm})$.
\end{lemma}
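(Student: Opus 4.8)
The plan is to derive Lemma~\ref{lemma:AW_inequality} as a direct corollary of the matrix Chernoff-type bound for a sum of bounded p.s.d.\ random matrices (the Ahlswede--Winter / operator-Chernoff inequality, here invoked as Corollary~2 of \cite{notes13}). Concretely, I would first normalize the problem. Set $M := \mathbb{E}[\boldsymbol{Z}]$. If $M$ is singular, restrict attention to its range: since $\boldsymbol{Z} \preceq \bar\rho M$ almost surely, each realization $\mathcal{Z}_i$ has range contained in $\mathrm{range}(M)$, so the entire sum lives in that subspace and we may work with $d' = \mathrm{rank}(M)$ in place of $d$; because $d' \le d$ this only makes the logarithm in \eqref{eqn:equality} smaller, so the stated bound with $d$ is still valid. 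Assuming then $M \in \mathbb{S}_{++}^{d}$, introduce the whitened matrices $\boldsymbol{Y}_i := M^{-1/2} \boldsymbol{Z}_i M^{-1/2}$, which are i.i.d., p.s.d., satisfy $\mathbb{E}[\boldsymbol{Y}_i] = I_d$, and obey $\boldsymbol{Y}_i \preceq \bar\rho\, I_d$ almost surely by \eqref{eqn:inequality_R_bar} and congruence invariance of the semi-definite order.

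Next I would apply the standard two-sided operator-Chernoff bound to $\sum_{i\in[\gamma]} \boldsymbol{Y}_i$: for i.i.d.\ p.s.d.\ summands with mean $I_d$ and $\boldsymbol{Y}_i \preceq R\,I_d$, one has, for $\epsilon \in (0,1)$,
\begin{align*}
\Pr\!\left\{ \textstyle\sum_{i\in[\gamma]} \boldsymbol{Y}_i \not\in \big[(1-\epsilon)\gamma I_d,\ (1+\epsilon)\gamma I_d\big] \right\} \le 2d \exp\!\big(-\gamma\epsilon^2/(4R)\big).
\end{align*}
This is exactly the content of the cited corollary; the proof behind it is the Laplace-transform / matrix-MGF argument (Golden--Thompson, $\mathbb{E}\,e^{\theta \boldsymbol{Y}} \preceq \exp((e^{\theta R}-1)/R)$ on the mean, then a union bound over the two tail events and optimization in $\theta$), which I would simply quote rather than reprove. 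Plugging in $R = \bar\rho$ and $\epsilon = \bar\epsilon$, the failure probability is at most $2d\exp(-\gamma\bar\epsilon^2/(4\bar\rho))$, and the hypothesis \eqref{eqn:equality}, namely $\gamma\bar\epsilon^2/\bar\rho = 4\log_e(2d/\bar\delta)$, makes this exactly $2d \cdot \exp(-\log_e(2d/\bar\delta)) = 2d \cdot \bar\delta/(2d) = \bar\delta$. Hence the complementary event holds with probability at least $1 - \bar\delta$.

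Finally I would un-whiten: the event $(1-\bar\epsilon)\gamma I_d \preceq \sum_i \boldsymbol{Y}_i \preceq (1+\bar\epsilon)\gamma I_d$ is, after conjugating by $M^{1/2}$ on both sides (which preserves $\preceq$), precisely the event $(1-\bar\epsilon)\gamma M \preceq \sum_i \boldsymbol{Z}_i \preceq (1+\bar\epsilon)\gamma M$ in \eqref{eqn:event_00}. This completes the argument. The only genuinely delicate point is the singular-$M$ reduction at the start --- making sure the range containment really follows from \eqref{eqn:inequality_R_bar} and that replacing $d$ by $\mathrm{rank}(M)$ is harmless; everything else is bookkeeping around a black-box concentration bound. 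I would also remark that \eqref{eqn:equality} being an \emph{equality} (rather than an inequality $\ge$) is just a convenient normalization that pins down the relationship among $\gamma,\bar\epsilon,\bar\rho,\bar\delta$; the conclusion would still hold if $\gamma\bar\epsilon^2/\bar\rho \ge 4\log_e(2d/\bar\delta)$.
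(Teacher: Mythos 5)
Your proposal is correct and follows essentially the same route as the paper: the paper presents Lemma~\ref{lemma:AW_inequality} as a direct reformulation of Corollary~2 in \cite{notes13}, and your whitening by $\mathbb{E}[\boldsymbol{Z}]^{-1/2}$ (or its pseudo-inverse square root in the singular case), black-box application of the two-sided Ahlswede--Winter bound, and un-whitening is precisely the reformulation the paper invokes --- indeed it mirrors the $\tilde{Z}^{+/2}$-conjugation used in the appendix proof of the generalization. Your handling of singular $\mathbb{E}[\boldsymbol{Z}]$ and the observation that the equality \eqref{eqn:equality} could be relaxed to an inequality are both sound.
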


We introduce the function $r( \rho , \zeta ) := ( \hspace{0.25mm} 1 - \zeta^2 / \rho \hspace{0.25mm} )$ to compactly state our matrix CIs in the subsequent theorem, where $\zeta$ and $\rho$ are scalars that are we introduce in Theorem~\ref{theorem:inequality_Z}. Observe that $r( \rho , \zeta )$ is defined if $\rho \neq \zeta^2$. Also, if $\rho > \zeta^2$, then $r(\rho,\zeta) > 0$. Throughout this work, we also refer to $r(\rho,\zeta)$ as the quantity $r$ for brevity in notation. Next, we introduce the matrix CIs in Theorem~\ref{theorem:inequality_Z}, a generalization of the guarantees of Lemma~\ref{lemma:AW_inequality}.

\begin{theorem}
\label{theorem:inequality_Z}
Let $( \boldsymbol{Z}_i )_{i \in [ \gamma ]}$ denote a sequence of $\gamma$ i.i.d. and p.s.d. random matrices. Let the tuple $\mathcal{T} \hspace{-0.75mm} := \hspace{-0.75mm} ( d, \delta, \gamma, \rho, \epsilon, p, \zeta )$ of~parameters satisfy the inequality
\begin{align}
\label{eqn:inequality_R}
\boldsymbol{Z} \preceq \rho \, \mathbb{E}[ \boldsymbol{Z} ]
\end{align}
almost surely for the scalar $\rho \geq 1$ and the equality
\begin{align}
\label{eqn:inequality_theorem_Z_c3}
r \, \gamma \, \epsilon^2 / \rho = 4 \hspace{0.25mm} \log_{e}{( d / \delta )}
\end{align}
for the scalars $\delta \in (0,1)$, $\epsilon \in (0,2]$, and $\zeta \in [0,1]$. Assume the scalars $\rho$ and $\zeta$ satisfy the inequality $\rho > \zeta^2$. Then,
\begin{align}
\label{eqn:inequality_theorem_Z_idd_c4}
\mathbb{P} \left[ \, ( \hspace{0.25mm} 1 - r \epsilon \hspace{0.25mm} ) \, \gamma \, \mathbb{E}[ \boldsymbol{Z} ] \preceq {\textstyle\sum\nolimits_{\, i \in [ \gamma ]}} \, \boldsymbol{Z}_i \, \right] &\geq (\, 1-\delta \,), \\
\label{eqn:inequality_theorem_Z_idd_c5}
\mathbb{P} \left[ \, {\textstyle\sum\nolimits_{\, i \in [ \gamma ]}} \, \boldsymbol{Z}_i \preceq ( \hspace{0.25mm} 1 + r \epsilon \hspace{0.25mm} ) \, \gamma \, \mathbb{E}[ \boldsymbol{Z} ] \, \right] &\geq (\, 1-\delta \,).
\end{align}
\end{theorem}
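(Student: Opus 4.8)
My plan is to prove the two one-sided semidefinite bounds \eqref{eqn:inequality_theorem_Z_idd_c4} and \eqref{eqn:inequality_theorem_Z_idd_c5} separately by a matrix Laplace-transform (Ahlswede--Winter type) argument, but using a \emph{Bernstein-type} estimate of the matrix moment-generating function that exploits the almost-sure bound \eqref{eqn:inequality_R} \emph{together with} the second-moment information it implies; the plain matrix Chernoff bound, which uses only \eqref{eqn:inequality_R} and the mean $\mathbb{E}[\boldsymbol{Z}]$, is not sharp enough to certify the deviation $r\epsilon$ under the sample count fixed by \eqref{eqn:inequality_theorem_Z_c3}. First I would reduce to whitened variables: assuming $\mathbb{E}[\boldsymbol{Z}]\in\mathbb{S}_{++}^{d}$ (the singular case reduces to this because \eqref{eqn:inequality_R} forces each $\boldsymbol{Z}_i$ to be supported on the range of $\mathbb{E}[\boldsymbol{Z}]$, so one restricts to that subspace and uses $\mathbb{E}[\boldsymbol{Z}]^{+}$), set $\boldsymbol{W}_i := \mathbb{E}[\boldsymbol{Z}]^{-1/2}\boldsymbol{Z}_i\,\mathbb{E}[\boldsymbol{Z}]^{-1/2}$, so that the $\boldsymbol{W}_i$ are i.i.d.\ with $\mathbb{E}[\boldsymbol{W}_i]=I_d$ and $0_d\preceq\boldsymbol{W}_i\preceq\rho I_d$ by \eqref{eqn:inequality_R}. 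With $s:=r\epsilon$, conjugating the event in \eqref{eqn:inequality_theorem_Z_idd_c5} by $\mathbb{E}[\boldsymbol{Z}]^{-1/2}$ shows it equals $\{\overline{\lambda}(\sum_{i\in[\gamma]}(\boldsymbol{W}_i-I_d))\le s\gamma\}$, and likewise \eqref{eqn:inequality_theorem_Z_idd_c4} becomes $\{\overline{\lambda}(\sum_{i\in[\gamma]}(I_d-\boldsymbol{W}_i))\le s\gamma\}$.

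Next I would apply the matrix Laplace-transform inequality $\mathbb{P}[\overline{\lambda}(\sum_{i\in[\gamma]}\boldsymbol{X}_i)\ge t]\le d\,e^{-\theta t}\,\overline{\lambda}(\mathbb{E}[e^{\theta\boldsymbol{X}}])^{\gamma}$ (valid for every $\theta>0$ and every i.i.d.\ zero-mean Hermitian sequence), once with $\boldsymbol{X}_i=\boldsymbol{W}_i-I_d$ and once with $\boldsymbol{X}_i=I_d-\boldsymbol{W}_i$. The key ingredient is a Bernstein-type bound on the matrix mgf: since $\overline{\lambda}(\boldsymbol{X})\le R$ almost surely --- with $R=\rho-1$ in the first case and $R=1$ in the second, the latter because $\boldsymbol{W}\succeq 0_d$ --- and since $\mathbb{E}[\boldsymbol{X}^2]=\mathbb{E}[\boldsymbol{W}^2]-I_d\preceq\rho\,\mathbb{E}[\boldsymbol{W}]-I_d=(\rho-1)I_d$, the standard matrix form of the Bennett/Bernstein mgf estimate yields $\overline{\lambda}(\mathbb{E}[e^{\theta\boldsymbol{X}}])\le\exp((\rho-1)(e^{\theta R}-\theta R-1)/R^{2})$. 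Optimizing over $\theta>0$ gives, in each case, the Bernstein tail $\mathbb{P}[\overline{\lambda}(\sum_{i\in[\gamma]}\boldsymbol{X}_i)\ge t]\le d\exp(-t^{2}/(2[(\rho-1)\gamma+Rt/3]))$.

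Finally I would set $t=s\gamma=r\epsilon\gamma$ and invoke the parameter equality \eqref{eqn:inequality_theorem_Z_c3} in the form $\log_{e}(d/\delta)=r\gamma\epsilon^{2}/(4\rho)$. After cancelling $\gamma$ and a common factor $r\epsilon^{2}$, the requirement that the Bernstein tail not exceed $\delta$ reduces to the scalar inequality $2r\rho\ge(\rho-1)+R\,r\epsilon/3$, with $R=\rho-1$ for \eqref{eqn:inequality_theorem_Z_idd_c5} and $R=1$ for \eqref{eqn:inequality_theorem_Z_idd_c4}. Since $\zeta\le 1$ gives $r=1-\zeta^{2}/\rho\ge(\rho-1)/\rho$, hence $r\rho\ge\rho-1$, and since $r\le 1$, $\epsilon\le 2$, and $\rho\ge 1$, this holds with room to spare: for $R=\rho-1$ one has $2r\rho\ge 2(\rho-1)\ge(\rho-1)(1+r\epsilon/3)$, and for $R=1$ one keeps $r\epsilon/3$ paired with $r$ and uses $r(2\rho-\epsilon/3)\ge\frac{\rho-1}{\rho}(2\rho-2/3)\ge\rho-1$. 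Each one-sided event therefore occurs with probability at least $1-\delta$, which is \eqref{eqn:inequality_theorem_Z_idd_c4}--\eqref{eqn:inequality_theorem_Z_idd_c5}; no union bound is needed because the theorem states the two bounds separately. I expect this last calibration --- pinning down exactly how the constant $4$ in \eqref{eqn:inequality_theorem_Z_c3} and the admissible ranges $\epsilon\in(0,2]$, $\zeta\in[0,1]$ conspire to make the scalar inequality true --- to be the only substantive step, with the whitening reduction and the singular-$\mathbb{E}[\boldsymbol{Z}]$ bookkeeping being routine.
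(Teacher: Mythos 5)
Your proof is correct, and it reaches the theorem by a genuinely different route through the same general (Ahlswede--Winter/Laplace-transform) framework. The whitening step is identical in spirit to the paper's: the appendix there sets $\boldsymbol{Y}:=\tilde{Z}^{+/2}\boldsymbol{Z}\tilde{Z}^{+/2}$ and handles a singular $\mathbb{E}[\boldsymbol{Z}]$ with pseudo-inverses exactly as you sketch. Where you diverge is the moment-generating-function estimate. The paper normalizes $\boldsymbol{X}:=(\boldsymbol{Y}-\mathbb{E}[\boldsymbol{Y}])/\rho$ so that $\|\boldsymbol{X}\|\leq 1$, bounds the mgf with the elementary inequalities $I_d+X\preceq e^{X}\preceq I_d+X+X^{2}$, and lets the refinement factor $\zeta$ enter through the second-moment bound $\mathbb{E}[\boldsymbol{X}^{2}]\preceq(\rho-\zeta^{2})/\rho^{2}\,I_d$; optimizing the Chernoff parameter at $\lambda^{*}=\epsilon/2$ is what forces $\epsilon\in(0,2]$ and makes \eqref{eqn:inequality_theorem_Z_c3} hold with \emph{equality} against $\delta$. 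You instead invoke the off-the-shelf matrix Bernstein/Bennett mgf bound with variance proxy $\rho-1$ and then verify a residual scalar calibration $2r\rho\geq(\rho-1)+Rr\epsilon/3$, which I have checked and which does hold under $\rho\geq 1$, $r\geq(\rho-1)/\rho$, $r\leq 1$, $\epsilon\leq 2$ (including the degenerate $\rho=1$ case). What your route buys: a shorter argument resting on a standard citation, a sharper tail in principle (Bernstein rather than the sub-Gaussian surrogate), and the illuminating observation that after whitening the mean is a projection, so the tightest admissible variance proxy corresponds to $\zeta=1$ and smaller $\zeta$ only loosens the deviation target $r\epsilon\gamma$ rather than improving the variance estimate. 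What it gives up: the exact provenance of the constant $4$ and of the range $\epsilon\in(0,2]$ in \eqref{eqn:inequality_theorem_Z_c3} (your chain ends with slack, the paper's with equality), and self-containedness, since the paper deliberately builds everything from the two exponential inequalities it cites rather than from the full matrix Bernstein machinery. The only point worth making explicit in a polished write-up is the $R=\rho-1=0$ boundary, where the Bennett factor $(e^{\theta R}-\theta R-1)/R^{2}$ must be read as its limit $\theta^{2}/2$; there $\boldsymbol{Z}=\mathbb{E}[\boldsymbol{Z}]$ almost surely and both events are deterministic.
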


We now describe how the matrix CIs in Theorem~\ref{theorem:inequality_Z} differ from the guarantees in Lemma~\ref{lemma:AW_inequality}. The first major difference is the embedding of a refinement factor $\zeta$ throughout the conditions required to use Theorem~\ref{theorem:inequality_Z}. Observe that the presence of $\zeta$ introduces further complexity to the constraints that need to be satisfied. The second major difference is that the semi-definite bounds in \eqref{eqn:inequality_theorem_Z_idd_c4} and \eqref{eqn:inequality_theorem_Z_idd_c5} are one-sided.

Next, we show that the requirements and guarantees of~Theorem~\ref{theorem:inequality_Z} are a generalization of Lemma~\ref{lemma:AW_inequality} by proving the former implies the latter. Before we provide a sketch of the proof, we define $\Phi$ as the intersection of the events in \eqref{eqn:inequality_theorem_Z_idd_c4} and \eqref{eqn:inequality_theorem_Z_idd_c5}, i.e.,
\begin{align*}
\Phi \hspace{-0.25mm} := \hspace{-0.25mm} \left\{ (\hspace{0.25mm} 1 - r \epsilon \hspace{0.25mm} ) \, \gamma \, \mathbb{E}[ \boldsymbol{Z} ] \preceq {\textstyle\sum\nolimits_{ \, i \in [ \gamma ] }} \boldsymbol{Z}_i \preceq (\hspace{0.25mm} 1 + r \epsilon \hspace{0.25mm}) \, \gamma \, \mathbb{E}[ \boldsymbol{Z} ] \right\}.
\end{align*}
The proof first consists of setting the refinement factor $\zeta$ in Theorem~\ref{theorem:inequality_Z} to zero. Next, we combine the guarantees of \eqref{eqn:inequality_theorem_Z_idd_c4} and \eqref{eqn:inequality_theorem_Z_idd_c5} to derive the following matrix CI, $\mathbb{P}[\Phi] \geq ( 1 - 2 \delta )$, by leveraging the addition rule of probability for two non-mutually exclusive events. Then, we restrict the interval of $\epsilon$ from $(0,2]$ to $(0,1)$. Finally, by setting the following~quantities, $2 \delta$, $\epsilon$, and $\rho$, as $\bar{\delta}$, $\bar{\epsilon}$, and $\bar{\rho}$, respectively, we derive the requirements and guarantees of Lemma~\ref{lemma:AW_inequality}.

Next, we comment on the when the semi-definite bounds of Lemma~\ref{lemma:AW_inequality} and Theorem~\ref{theorem:inequality_Z} are non-trivial. We refer to the bounds of either Lemma~\ref{lemma:AW_inequality} or Theorem~\ref{theorem:inequality_Z} as non-trivial if they are non-zero p.s.d. matrices and trivial if they are zero or negative semi-definite matrices. 
First, observe that the bounds in \eqref{eqn:event_00} are always non-trivial since the following requirements, $\bar{\epsilon} \in (0,1)$ and $\gamma \in \mathbb{N}$, of Lemma~\ref{lemma:AW_inequality} imply $( \hspace{0.25mm} 1 - \bar{\epsilon} \hspace{0.25mm} ) \, \gamma > 0$. The lower bound $( \hspace{0.25mm} 1 - \bar{\epsilon} \hspace{0.25mm} ) \, \gamma \, \mathbb{E}[ \boldsymbol{Z} ]$ is non-trivial since $( \hspace{0.25mm} 1 - \bar{\epsilon} \hspace{0.25mm} ) \, \gamma > 0$. The upper bound $( \hspace{0.25mm} 1 + \bar{\epsilon} \hspace{0.25mm} ) \, \gamma \, \mathbb{E}[ \boldsymbol{Z} ]$ is non-trivial as a consequence. Similarly, the upper bound in \eqref{eqn:inequality_theorem_Z_idd_c5} is always non-trivial since the following requirements, $\rho > \zeta^2$, $\epsilon \in (0,2]$, and $\gamma \in \mathbb{N}$, of Theorem~\ref{theorem:inequality_Z} imply $( \hspace{0.25mm} 1 + r \epsilon \hspace{0.25mm} ) \hspace{0.25mm} \gamma > 0$. In contrast, the lower bound in \eqref{eqn:inequality_theorem_Z_idd_c4} is trivial unless $( \hspace{0.25mm} 1 - r \epsilon \hspace{0.25mm} ) \, \gamma > 0$. If the conditions required to use Theorem~\ref{theorem:inequality_Z} are met, then the inequality $( \hspace{0.25mm} 1 - r \epsilon \hspace{0.25mm} ) \, \gamma > 0$ is equivalent to
\begin{align}
\label{eqn:complexity_00}
\gamma > ( \hspace{0.25mm} \rho - \zeta^2 \hspace{0.25mm} ) \, 4 \hspace{0.25mm} \log_e{ ( d / \delta ) }
\end{align}
as a result of simple algebraic manipulations. Thus, the lower bound in \eqref{eqn:inequality_theorem_Z_idd_c4} is only non-trivial for a range of $\gamma$.

Next, we comment on the sample complexities of the matrix CIs in this section. Note that $\gamma$ denotes the number of random matrices that need to be sampled for Lemma~\ref{lemma:AW_inequality} and Theorem~\ref{theorem:inequality_Z} to be applicable. First, observe that the following requirements, \eqref{eqn:equality}, $\bar{\rho} \geq 1$, and $\bar{\epsilon} \in (0,1)$, of Lemma~\ref{lemma:AW_inequality} imply
\begin{align}
\label{eqn:complexity_01}
\gamma > \bar{\rho} \, 4 \log_e( 2d / \bar{\delta} ) =: \bar{\kappa}.
\end{align}
Similarly, observe that the following requirements, \eqref{eqn:inequality_theorem_Z_c3}, $\rho \geq 1$, $\rho > \zeta^2$, and $\epsilon \in (0,2]$, of Theorem~\ref{theorem:inequality_Z} imply
\begin{align}
\label{eqn:complexity_02}
\gamma \geq \frac{\rho^2}{\, \rho - \zeta^2 \,} \, 2 \log_e( d / \delta ) =: \kappa.
\end{align}
Before we compare the above sampling complexities, we first observe that certain parameters in Lemma~\ref{lemma:AW_inequality} and Theorem~\ref{theorem:inequality_Z} can be equated to each other. For example, if we want to properly compare the matrix CIs in Lemma~\ref{lemma:AW_inequality} and Theorem~\ref{theorem:inequality_Z}, then we must set $\bar{\delta}$ equal to $\delta$. Also, since $\bar{\rho}$ and $\rho$ satisfy the same inequality almost surely, they can be seen as the same variable. Thus, it is reasonable to set $\bar{\rho}$ equal to $\rho$. Finally, if we naively set $\zeta$ to zero, then observe that $\bar{\kappa} > \kappa$ for any values of $d \in \mathbb{N}$, $\delta \in (0,1)$, and $\rho \geq 1$, where the quantities $\bar{\kappa}$ and $\kappa$ are related by the equation $\bar{\kappa} = 2 \kappa + \rho \, 4 \log_{e}(2)$.

\begin{corollary}
If the relevant parameters of Lemma~\ref{lemma:AW_inequality} are set to those of Theorem~\ref{theorem:inequality_Z}, then there always exists a value of $\zeta$, where $\bar{\kappa} > \kappa$. Thus, in contrast to Lemma~\ref{lemma:SS_bounds}, the guarantees of Theorem~\ref{theorem:inequality_Z} require fewer number of sampled sensors.
\end{corollary}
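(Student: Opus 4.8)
## Proof Proposal

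The plan is to establish the strict inequality $\bar{\kappa} > \kappa$ by exhibiting a specific admissible choice of $\zeta$ (as a function of the shared parameters $d$, $\delta$, $\rho$) for which the inequality holds, and then invoking the sample-complexity bounds \eqref{eqn:complexity_01} and \eqref{eqn:complexity_02} together with the generalization argument already sketched after Theorem~\ref{theorem:inequality_Z}. Recall that after equating the shared parameters ($\bar{\delta} = \delta$, $\bar{\rho} = \rho$), we have $\bar{\kappa} = \rho \, 4 \log_e(2d/\delta)$ and $\kappa = \frac{\rho^2}{\rho - \zeta^2} \, 2 \log_e(d/\delta)$. The strategy is to treat $\zeta^2$ as a free parameter in the interval $[0, \min\{1,\rho\})$ and show that $\kappa$, viewed as a function of $\zeta^2$, is continuous and increasing, with $\kappa|_{\zeta = 0} < \bar{\kappa}$ strictly, so that a whole sub-interval of admissible $\zeta$ values yields $\bar{\kappa} > \kappa$.

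First I would write $\bar{\kappa} = 2\kappa_0 + \rho \, 4\log_e 2$ where $\kappa_0 := \kappa|_{\zeta=0} = \rho \, 2\log_e(d/\delta)$, exactly the relation stated just before the corollary; since $\rho \geq 1$ and $\log_e 2 > 0$, this already gives $\bar{\kappa} > 2\kappa_0 > \kappa_0$, i.e. the $\zeta = 0$ case is strictly favorable (the corollary statement's caveat about $\zeta = 0$ being ``naive'' refers to the one-sided-ness issue, not to this inequality failing). Second, I would note $\kappa = \frac{\rho}{\rho - \zeta^2}\kappa_0$, which is continuous and strictly increasing in $\zeta^2 \in [0, \min\{1,\rho\})$, with value $\kappa_0$ at $\zeta = 0$. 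Third, by continuity there is an $\varepsilon > 0$ such that for all $\zeta^2 \in [0, \varepsilon)$ we still have $\kappa < \bar{\kappa}$ (indeed $\kappa < \bar{\kappa}$ holds as long as $\frac{\rho}{\rho-\zeta^2}\kappa_0 < 2\kappa_0 + \rho\,4\log_e 2$, which, solving for $\zeta^2$, holds on a nonempty interval containing $0$). Hence there always exists a value of $\zeta$ with $\bar{\kappa} > \kappa$. Finally, combining this with \eqref{eqn:complexity_01}--\eqref{eqn:complexity_02} and the fact that $\gamma$ counts the number of sampled sensors (Section~\ref{subsection:sampling_policy}), the lower sample-complexity threshold $\kappa$ of Theorem~\ref{theorem:inequality_Z} is strictly smaller than the threshold $\bar{\kappa}$ forced by Lemma~\ref{lemma:AW_inequality}, which by the generalization argument underlies Lemma~\ref{lemma:SS_bounds}; this yields the claimed reduction in required samples.

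The main obstacle, and the place I would be most careful, is bookkeeping around which parameters are genuinely free versus constrained: the equality constraints \eqref{eqn:equality} and \eqref{eqn:inequality_theorem_Z_c3} couple $\gamma$, $\epsilon$ (resp. $\bar{\epsilon}$), and the other parameters, so I must confirm that for the chosen $\zeta$ there exists a legitimate $\epsilon \in (0,2]$ (and $\gamma \in \mathbb{N}$) making \eqref{eqn:inequality_theorem_Z_c3} satisfiable — i.e. the derivation of \eqref{eqn:complexity_02} from the constraints must be checked to go through for the selected $\zeta$, not just formally. Concretely, \eqref{eqn:complexity_02} comes from maximizing over $\epsilon \in (0,2]$ in \eqref{eqn:inequality_theorem_Z_c3}; I would verify that the bound $\kappa = \frac{\rho^2}{\rho-\zeta^2}2\log_e(d/\delta)$ indeed arises at $\epsilon = 2$ and $r = 1 - \zeta^2/\rho$, and that this $\epsilon$ is in-range, so that the threshold is attained and the comparison $\bar\kappa > \kappa$ is meaningful. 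Everything else is elementary algebra and monotonicity.
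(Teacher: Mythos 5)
Your proposal is correct and follows essentially the same route as the paper, whose entire argument is the observation preceding the corollary: set $\zeta = 0$, note $\bar{\kappa} = 2\kappa|_{\zeta=0} + \rho\,4\log_e(2)$ with both terms positive (since $d/\delta > 1$ and $\rho \geq 1$), and conclude $\bar{\kappa} > \kappa$. Your additional continuity argument showing that a whole interval of $\zeta$ values near $0$ also satisfies $\bar{\kappa} > \kappa$ is a mild (and valid) strengthening, not a different approach.
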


The above analysis concludes that the smallest complexity bound $\kappa$ is achieved when $\zeta = 0$. Therefore, if one seeks to choose the smallest value of $\gamma$, then $\zeta$ should be set to zero and $\gamma$ set to $\lceil \kappa \rceil$. However, if $\gamma$ is some specified quantity and $\lceil \kappa \rceil$ (given $\zeta=0$) is smaller than $\gamma$, then the value of $\zeta$ can be adjusted to significantly tighten the semi-definite bounds in Theorem~\ref{theorem:inequality_Z}. We observe this in practice and we demonstrate it in Section~\ref{section:numerical_results}.

Furthermore, the above analysis on sampling complexity holds if one simply wants to satisfy the conditions required to use Theorem~\ref{theorem:inequality_Z}. 
However, another requirement may be that the lower bound in \eqref{eqn:inequality_theorem_Z_idd_c4} is non-trivial. If we also want to satisfy this requirement, then the sampling complexity now consists of choosing a value of $\gamma$ that simultaneously satisfies the following conditions, \eqref{eqn:complexity_00} and \eqref{eqn:complexity_02}. The complexity analysis for this scenario requires further investigation.

Finally, if the lower bound in \eqref{eqn:inequality_theorem_Z_idd_c4} is also required to be non-trivial, then we conjecture that Theorem~\ref{theorem:inequality_Z} still requires fewer number of sampled sensors than Lemma~\ref{lemma:AW_inequality} since the former is a generalization of the latter.


\subsection{Sensor Selection for State Estimation}
\label{subsection:application}

In this section, we use the matrix CIs of Section~\ref{subsection:matrix_CIs} to address the problem of sensor selection for state estimation via Kalman filtering. Specifically, we apply the CIs to the steady-state estimation error covariance matrix of the Kalman filter specified in Section~\ref{subsection:kalman_filtering}. 
In summary, our results in this section guarantee (in a probabilistic sense) what spectral properties the steady-state error covariance should possess when a selection $\mathcal{S}$ of sensors is randomly drawn according to the sampling policy in Section~\ref{subsection:sampling_policy}. 

Before we apply our CIs to the covariance equations of the Kalman filter, we lay out the conditions and assumptions that are necessary for our results in this section.


First, we set $\bar{\delta}$ of Lemma~\ref{lemma:AW_inequality} to $\delta$. This guarantees that our subsequent results, i.e., Lemma~\ref{lemma:SS_bounds} and Corollary~\ref{corollary:SS_bounds_novel}, can be properly compared. We define $\hat{\mathcal{T}} := (d,\delta,\gamma,\bar{\rho},\bar{\epsilon},p)$ as the tuple $\bar{\mathcal{T}}$ of Lemma~\ref{lemma:AW_inequality}, where $\bar{\delta}$ is set to $\delta$.

Second, our subsequent results assume that the pairs $(A, \mathbb{E}[ \boldsymbol{Z} ]^{1/2})$ and $(A,C_{\mathcal{S}})$ are detectable for any selection $\mathcal{S}$ of sensors. We refer the reader to Section~\ref{subsection:kalman_filtering} for the recursive form of the filtered error covariance $P_{\mathcal{S},(t)}$. Also, the steady-state solution to any quantity in this section is defined similar to how the steady-state solution of $P_{\mathcal{S},(t)}$ is defined in Section~\ref{subsection:kalman_filtering}.

Now, we introduce our first result. It is a reformulation of a theorem in \cite{10193790} that applies the guarantees of Lemma~\ref{lemma:AW_inequality} to obtain concentration-based guarantees for the steady-state error covariance. The theorem is Theorem~3 of \cite{10193790} and we refer it to as Lemma~\ref{lemma:SS_bounds} in this work. Our goal in this paper is to try to investigate whether Theorem~\ref{theorem:inequality_Z}, a generalization of Lemma~\ref{lemma:AW_inequality}, can lead to an improvement of Lemma~\ref{lemma:SS_bounds}.

\begin{lemma}
\label{lemma:SS_bounds}
Suppose the tuple $\hat{\mathcal{T}}$ of parameters satisfies the conditions required to use Lemma~\ref{lemma:AW_inequality}. Let $\bar{U}_{S}$ and $\bar{L}_{S}$ denote the unique p.d. steady-state solution to
\begin{gather}
\label{eqn:01}
\bar{U}_{S,(t+1)} = f( \, \bar{U}_{S,(t)} \, , \, ( \, 1 - \bar{\epsilon} \, ) \, \gamma \, \mathbb{E}[ \boldsymbol{Z} ] \, ),   \\
\bar{L}_{S,(t+1)} = f( \, \bar{L}_{S,(t)} \, , \, ( 1 + \bar{\epsilon} ) \, \gamma \, \mathbb{E}[ \boldsymbol{Z} ] \, ),   \nonumber
\end{gather}
respectively, such that $\bar{U}_{S,(-1)}, \bar{L}_{S,(-1)} \in \mathbb{S}_{+}^{d}$. Then,
\begin{align}
\label{eqn:CI_SS}
\mathbb{P} [ \, \bar{L}_S \preceq P_{\mathcal{S}} \preceq \bar{U}_S \, ] \geq ( 1 - \delta ).
\end{align}
\end{lemma}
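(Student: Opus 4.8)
The plan is to transfer the semidefinite concentration bound on the sum $\sum_{i\in[\gamma]}\boldsymbol{Z}_i = C_{\mathcal{S}}^{T}R_{\mathcal{S}}^{-1}C_{\mathcal{S}}$ (which, via Lemma~\ref{lemma:AW_inequality}, is sandwiched between $(1-\bar\epsilon)\gamma\,\mathbb{E}[\boldsymbol{Z}]$ and $(1+\bar\epsilon)\gamma\,\mathbb{E}[\boldsymbol{Z}]$ with probability at least $1-\delta$) through the Riccati recursion \eqref{eqn:P_filtered}, written compactly as $P_{\mathcal{S},(t+1)} = f(P_{\mathcal{S},(t)}, C_{\mathcal{S}}^{T}R_{\mathcal{S}}^{-1}C_{\mathcal{S}})$. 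The two ingredients I need are: (i) monotonicity of $f(\Lambda,\Xi)$ in its second argument in the semidefinite order, and (ii) monotonicity of $f(\Lambda,\Xi)$ in its first argument, so that the comparison propagates along the time iteration and survives the limit $t\to\infty$.

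First I would establish the monotonicity of $f$ in $\Xi$: if $\Xi_1 \preceq \Xi_2$ then, since $(A\Lambda A^{T}+Q)^{-1}+\Xi_1 \preceq (A\Lambda A^{T}+Q)^{-1}+\Xi_2$ and matrix inversion is order-reversing on $\mathbb{S}_{++}^{d}$, we get $f(\Lambda,\Xi_2)\preceq f(\Lambda,\Xi_1)$. Hence on the event $\Phi$ of Lemma~\ref{lemma:AW_inequality} (probability $\geq 1-\delta$) we have the pointwise-in-state bracketing
\begin{align*}
f(\Lambda, (1+\bar\epsilon)\gamma\,\mathbb{E}[\boldsymbol{Z}]) \preceq f(\Lambda, C_{\mathcal{S}}^{T}R_{\mathcal{S}}^{-1}C_{\mathcal{S}}) \preceq f(\Lambda, (1-\bar\epsilon)\gamma\,\mathbb{E}[\boldsymbol{Z}])
\end{align*}
for every $\Lambda\in\mathbb{S}_{++}^{d}$. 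Second I would establish monotonicity of $f$ in $\Lambda$: if $\Lambda_1\preceq\Lambda_2$ then $A\Lambda_1 A^{T}+Q \preceq A\Lambda_2 A^{T}+Q$, inversion reverses the order, adding $\Xi$ preserves it, and a final inversion reverses it back, giving $f(\Lambda_1,\Xi)\preceq f(\Lambda_2,\Xi)$. Combining (i) and (ii), $f$ is jointly monotone in the sense needed: larger state covariance and smaller information matrix both push the next covariance up.

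Next I would run an induction on $t$. Initialize all three recursions — the ones defining $\bar U_{S,(t)}$, $P_{\mathcal{S},(t)}$, and $\bar L_{S,(t)}$ — at a common initial condition (or, more carefully, invoke the standard fact that the steady-state solution of the Riccati recursion is independent of the p.s.d. initial condition under the stated stabilizability/detectability hypotheses, so the choice of $\bar U_{S,(-1)}$, $\bar L_{S,(-1)}$, $P_{\mathcal{S},(-1)}$ is immaterial in the limit). Assuming $\bar L_{S,(t)}\preceq P_{\mathcal{S},(t)}\preceq \bar U_{S,(t)}$ on the event $\Phi$, apply $f$: using monotonicity in $\Xi$ to replace $C_{\mathcal{S}}^{T}R_{\mathcal{S}}^{-1}C_{\mathcal{S}}$ by the appropriate multiple of $\mathbb{E}[\boldsymbol{Z}]$, and monotonicity in $\Lambda$ to replace $P_{\mathcal{S},(t)}$ by $\bar U_{S,(t)}$ or $\bar L_{S,(t)}$, yields $\bar L_{S,(t+1)}\preceq P_{\mathcal{S},(t+1)}\preceq \bar U_{S,(t+1)}$. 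Then take $t\to\infty$: all three sequences converge to their respective steady-state values ($P_{\mathcal{S}}$, $\bar U_S$, $\bar L_S$) — here I use that $(A,\mathbb{E}[\boldsymbol{Z}]^{1/2})$ detectable implies $(A,((1\pm\bar\epsilon)\gamma\,\mathbb{E}[\boldsymbol{Z}])^{1/2})$ detectable since positive scaling does not affect detectability, and $(A,C_{\mathcal{S}})$ detectable with $Q\succ 0$ guarantees convergence of $P_{\mathcal{S},(t)}$ — and the semidefinite order is preserved under limits (closedness of $\mathbb{S}_{+}^{d}$), so $\bar L_S \preceq P_{\mathcal{S}} \preceq \bar U_S$ holds on $\Phi$. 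Since $\mathbb{P}[\Phi]\geq 1-\delta$, \eqref{eqn:CI_SS} follows.

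The main obstacle I anticipate is the limiting step rather than the monotonicity algebra, which is routine. Specifically, one must be careful that the event $\Phi$ is fixed (it depends only on the random draw $\mathcal{S}$, not on $t$), so that on that single event all three deterministic Riccati recursions can be compared for every $t$ and then passed to the limit simultaneously; and one must justify that each limit exists and is the claimed unique p.d. steady-state solution, which requires checking the detectability hypotheses are inherited by the bracketing recursions and invoking the standard convergence theory for the Riccati recursion (initial-condition independence). A secondary subtlety is that $f$ is only defined for $\Lambda\in\mathbb{S}_{++}^{d}$, so one should note that $A\Lambda A^{T}+Q\succ 0$ whenever $Q\succ 0$, keeping every iterate in the domain of $f$; this is why the assumption $Q\in\mathbb{S}_{++}^{d}$ is used. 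With these points handled, the argument is a clean sandwich-and-pass-to-the-limit.
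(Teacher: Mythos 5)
Your proposal is correct and matches the intended argument: the paper itself does not reprove Lemma~\ref{lemma:SS_bounds} (it imports it as Theorem~3 of \cite{10193790}), and the derivation alluded to there and in the sketch of Corollary~\ref{corollary:SS_bounds_novel} is exactly your sandwich argument --- bracket $C_{\mathcal{S}}^{T}R_{\mathcal{S}}^{-1}C_{\mathcal{S}}$ on the event of Lemma~\ref{lemma:AW_inequality}, propagate through the Riccati map $f$ using its monotonicity in $\Lambda$ and anti-monotonicity in $\Xi$, and pass to the initial-condition-independent steady state. Your attention to the detectability of the scaled pairs and to the fact that the concentration event is fixed across $t$ covers the only nontrivial points.
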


Next, we obtain similar guarantees in Corollary~\ref{corollary:SS_bounds_novel} for the steady-state error covariance by applying Theorem~\ref{theorem:inequality_Z} instead of Lemma~\ref{lemma:AW_inequality}. We omit the bar notation from the bounds of Corollary~\ref{corollary:SS_bounds_novel} to distinguish them from those of Lemma~\ref{lemma:SS_bounds}.


\begin{corollary}
\label{corollary:SS_bounds_novel}
Suppose the tuple $\mathcal{T}$ of parameters satisfies the conditions required to use Theorem~\ref{theorem:inequality_Z}. Assume tuple $\mathcal{T}$ also satisfies the condition $( \, 1 - r \epsilon \, ) > 0$. Let $U_{S}$ and $L_{S}$ denote the unique p.d. steady-state solution to
\begin{gather}
\label{eqn:02}
U_{S,(t+1)} = f( \, U_{S,(t)} \, , \, ( \, 1 - r \epsilon \, ) \, \gamma \, \mathbb{E}[ \boldsymbol{Z} ] \, ),   \\
L_{S,(t+1)} = f( \, L_{S,(t)} \, , \, ( 1 + r \epsilon ) \, \gamma \, \mathbb{E}[ \boldsymbol{Z} ] \, ),   \nonumber
\end{gather}
respectively, such that $U_{S,(-1)}, L_{S,(-1)} \in \mathbb{S}_{+}^{d}$. Then,
\begin{align}
\label{eqn:CI_novel_01}
\mathbb{P} [ \, L_{S} \preceq P_{\mathcal{S}} \, ] &\geq ( \hspace{0.25mm} 1 - \delta \hspace{0.25mm} ), \\
\label{eqn:CI_novel_02}
\mathbb{P} [ \, P_{\mathcal{S}} \preceq U_{S} \, ] &\geq ( \hspace{0.25mm} 1 - \delta \hspace{0.25mm} ).
\end{align}
\end{corollary}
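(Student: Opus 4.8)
The plan is to push the one-sided semi-definite bounds of Theorem~\ref{theorem:inequality_Z} through the information-form recursion \eqref{eqn:P_filtered} by a monotone comparison argument, conditioning on the two high-probability events supplied by that theorem.

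First I would record two structural facts. (i) For every realization of the selection $\mathcal{S}$, the identity $C_{\mathcal{S}}^{T} R_{\mathcal{S}}^{-1} C_{\mathcal{S}} = \sum_{i\in[\gamma]}\boldsymbol{Z}_i$ lets us write $P_{\mathcal{S},(t+1)} = f( P_{\mathcal{S},(t)}, \sum_{i\in[\gamma]}\boldsymbol{Z}_i )$ with $f$ as in Definition~\ref{def:f}. (ii) The map $f$ is monotone nondecreasing in its first argument and monotone nonincreasing in its second argument with respect to $\preceq$. Fact (ii) is elementary: $X\mapsto AXA^{T}$ is operator monotone, $X\mapsto X+\Xi$ is order preserving, and $X\mapsto X^{-1}$ is operator antitone on $\mathbb{S}_{++}^{d}$; composing the antitone inversion with the monotone inner steps twice gives monotonicity in $\Lambda$, while composing it once with the addition of $\Xi$ gives antitonicity in $\Xi$. (Here $Q\succ 0$ guarantees $A\Lambda A^{T}+Q\succ 0$, so every inverse is well defined and $P_{\mathcal{S},(t)},U_{S,(t)},L_{S,(t)}$ remain p.d. for $t\ge 0$.)

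Next I would set up the comparison itself. Since $(A,Q^{1/2})$ is stabilizable and $(A,C_{\mathcal{S}})$ and $(A,\mathbb{E}[\boldsymbol{Z}]^{1/2})$ are detectable, and since $1-r\epsilon>0$ makes the scaled pair $(A,((1-r\epsilon)\gamma\,\mathbb{E}[\boldsymbol{Z}])^{1/2})$ detectable as well (detectability is invariant under multiplying the output map by a positive scalar, which is exactly where the hypothesis $1-r\epsilon>0$ is used; $1+r\epsilon>0$ always holds), standard discrete Riccati theory gives existence, uniqueness, and $\mathbb{S}_{+}^{d}$-initial-condition independence of the steady states $P_{\mathcal{S}}$, $U_{S}$, $L_{S}$. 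Exploiting this independence I may take $U_{S,(-1)}=L_{S,(-1)}=P_{\mathcal{S},(-1)}$ without loss of generality. Now condition on the event $E_{U}:=\{\sum_{i\in[\gamma]}\boldsymbol{Z}_i\succeq(1-r\epsilon)\gamma\,\mathbb{E}[\boldsymbol{Z}]\}$, which has probability at least $1-\delta$ by \eqref{eqn:inequality_theorem_Z_idd_c4}. On $E_{U}$, induction on $t$ (base case $t=-1$: equal initial conditions) combined with monotonicity in the first argument of $f$, antitonicity in the second, and the inductive hypothesis $P_{\mathcal{S},(t)}\preceq U_{S,(t)}$ yields
\[
P_{\mathcal{S},(t+1)} = f\!\left(P_{\mathcal{S},(t)},\,{\textstyle\sum_{i\in[\gamma]}\boldsymbol{Z}_i}\right) \preceq f\!\left(U_{S,(t)},\,{\textstyle\sum_{i\in[\gamma]}\boldsymbol{Z}_i}\right) \preceq f\!\left(U_{S,(t)},\,(1-r\epsilon)\gamma\,\mathbb{E}[\boldsymbol{Z}]\right) = U_{S,(t+1)}.
\]
Letting $t\to\infty$ and using that $\mathbb{S}_{+}^{d}$ is closed gives $P_{\mathcal{S}}\preceq U_{S}$ on $E_{U}$, hence $\mathbb{P}[P_{\mathcal{S}}\preceq U_{S}]\ge\mathbb{P}[E_{U}]\ge 1-\delta$, which is \eqref{eqn:CI_novel_02}. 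The bound \eqref{eqn:CI_novel_01} follows symmetrically by conditioning on $E_{L}:=\{\sum_{i\in[\gamma]}\boldsymbol{Z}_i\preceq(1+r\epsilon)\gamma\,\mathbb{E}[\boldsymbol{Z}]\}$ from \eqref{eqn:inequality_theorem_Z_idd_c5} and running the same induction with all inequalities reversed, which gives $L_{S,(t)}\preceq P_{\mathcal{S},(t)}$ and thus $L_{S}\preceq P_{\mathcal{S}}$ on $E_{L}$.

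The hard part will be the passage to steady state: a priori the recursion for $P_{\mathcal{S},(t)}$ and those for the bound sequences are not forced to share an initial condition, so the per-step comparison must be promoted to a comparison of limits. The clean route is the one above --- invoke initial-condition independence of the Riccati fixed point so that the initial conditions can be synchronized; a more hands-on alternative is to squeeze $P_{\mathcal{S},(t)}$ between two copies of the bound recursion started from perturbed data and appeal to continuity of the fixed point in the data. Either way, the only genuinely new verification beyond Theorem~\ref{theorem:inequality_Z} is that the scaled detectability/stabilizability hypotheses needed for $U_{S}$ and $L_{S}$ to be well-defined p.d. steady states do hold, and that is precisely the role of the extra assumption $1-r\epsilon>0$.
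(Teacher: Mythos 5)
Your proposal is correct and takes essentially the same route as the paper: the paper's own proof is only a sketch that says to repeat the derivation of Lemma~\ref{lemma:SS_bounds} with the one-sided events of Theorem~\ref{theorem:inequality_Z} in place of Lemma~\ref{lemma:AW_inequality}, and it assigns the hypothesis $(1 - r\epsilon) > 0$ exactly the role you identify, namely making $(1 - r\epsilon)\,\gamma\,\mathbb{E}[\boldsymbol{Z}]$ nonzero and p.s.d.\ so that $U_S$ is well defined. The details you supply (monotonicity of $f$ in its first argument and antitonicity in its second, conditioning on each high-probability event separately, induction on $t$, and initial-condition independence of the Riccati fixed point to pass to the steady state) are the standard way of making that sketch rigorous and contain no gaps.
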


\begin{proof}

[Sketch] A sketch of the proof of Corollary~\ref{corollary:SS_bounds_novel} consists of applying Theorem~\ref{theorem:inequality_Z} instead of Lemma~\ref{lemma:AW_inequality} in a derivation similar to that of Lemma~\ref{lemma:SS_bounds}. One necessary condition for the upper bound $U_{S}$ to be defined is that $( \, 1 - r \epsilon \, ) \, \gamma \, \mathbb{E}[ \boldsymbol{Z} ]$ is non-zero and p.s.d.
\end{proof}

Now, we comment on the upper bounds of Lemma~\ref{lemma:SS_bounds} and Corollary~\ref{corollary:SS_bounds_novel}. We remind the reader that a quantity is non-trivial if it is non-zero and p.s.d. First, observe that the quantities $( \, 1 - \bar{\epsilon} \, ) \, \gamma \, \mathbb{E}[ \boldsymbol{Z} ]$ and $( \, 1 - r \epsilon \, ) \, \gamma \, \mathbb{E}[ \boldsymbol{Z} ]$ need to be necessarily non-trivial for $\bar{U}_S$ and $U_{S}$ to be defined, respectively. The bound $\bar{U}_S$ satisfies this necessary condition since any feasible tuple $\hat{\mathcal{T}}$ implies $(\,1-\bar{\epsilon}\,)>0$. However, the bound $U_S$ does not satisfy this necessary condition since a feasible tuple $\mathcal{T}$ does not imply $( \, 1 - r \epsilon \, ) > 0$. This is why we require the condition $( \, 1 - r \epsilon \, ) > 0$ in Corollary~\ref{corollary:SS_bounds_novel}.

The above discussion and our analysis on sampling complexity in Section~\ref{subsection:matrix_CIs} imply minimum requirements on the number of sampled sensors for $\bar{U}_S$ and $U_{S}$. Since Lemma~\ref{lemma:SS_bounds} is derived using Lemma~\ref{lemma:AW_inequality}, the sampling complexity of $\bar{U}_S$ consists of satisfying \eqref{eqn:complexity_01}. Furthermore, since the quantity $( \, 1 - r \epsilon \, ) \, \gamma \, \mathbb{E}[ \boldsymbol{Z} ]$ in \eqref{eqn:02} must be non-trivial, the sampling complexity for $U_S$ consists of satisfying \eqref{eqn:complexity_00} and \eqref{eqn:complexity_02}.


\section{Numerics}
\label{section:numerical_results}

In this section, we indirectly compare the guarantees of Lemma~\ref{lemma:AW_inequality} and Theorem~\ref{theorem:inequality_Z} by comparing the estimation performance guaranteed (in a probabilistic sense) by Lemma~\ref{lemma:SS_bounds} and Corollary~\ref{corollary:SS_bounds_novel}, respectively. We use the worst-case estimation performance $\bar{\lambda}(P_{\mathcal{S}})$ to gauge the quality of our state estimate, and we remind the reader that $\bar{\lambda}(P_{\mathcal{S}})$ is a random variable since it is randomly drawn via the sampling policy in Section~\ref{subsection:sampling_policy}. Since $\bar{\lambda}(P_{\mathcal{S}})$ is stochastic, we use Lemma~\ref{lemma:SS_bounds} and Corollary~\ref{corollary:SS_bounds_novel} to upper bound it. We denote their upper bounds as $\bar{\lambda}(\bar{U}_{S})$ and $\bar{\lambda}(U_{S})$, respectively. In this numerical study, we confirm that our generalized guarantees of Theorem~\ref{theorem:inequality_Z} outperform the existing ones of Lemma~\ref{lemma:AW_inequality} by showing that $\bar{\lambda}(U_{S})$ is a tighter bound on $\bar{\lambda}(P_{\mathcal{S}})$ than $\bar{\lambda}(\bar{U}_{S})$.

Before we compare $\bar{\lambda}(\bar{U}_{S})$ and $\bar{\lambda}(U_{S})$, we first state the assumptions we make. We assume $d = 3$, $\delta = 0.05$, $\eta = 420$, and $Q = 0.50 \, I_d$. Also, we assume the elements of state matrix $A$ are independently chosen at random from a uniform distribution in the interval $(0,1)$. Each element of the sequence $\bm{c}_i \in \mathbb{R}^{d}$ is similarly chosen for each candidate sensor, i.e., for all $i \in [ \eta ]$. Also, the measurement noise variance for each candidate sensor is identical, i.e., $\bm{\sigma}_i^2 = 0.50$ for all $i \in [ \eta ]$. We remind the reader that the support $\mathsf{S}_z$ of random variable $\boldsymbol{Z}$ can be computed given $\bm{c}_i$ and $\bm{\sigma}_i^2$ since $\mathcal{Z}_i := \bm{\sigma}_i^{-2} \bm{c}_i \bm{c}_i^T$ for all $i \in [\eta]$. To satisfy the detectability condition $(A,C_{\mathcal{S}})$ in Section~\ref{subsection:application}, we confirm that the pair $(A,\bm{c}_i)$ is detectable for all $i \in [ \eta ]$. Next, we delve into the procedure for computing $\bar{\lambda}(\bar{U}_{S})$ and $\bar{\lambda}(U_{S})$.

First, we execute Algorithm~5 of \cite{10193790} to find a tuple $\hat{\mathcal{T}}$ for the guarantees in Lemma~\ref{lemma:SS_bounds}. The algorithm outputs optimal values for the quantities $\rho$, $\epsilon$, and $p$. We denote them with an asterisk. We interpret the optimal value $p^{*}$ as the sampling distribution that minimizes the worst-case estimation performance of the steady-state error covariance. We denote this new tuple of parameters as $\hat{\mathcal{T}}^{*} := (d,\delta,\gamma,\rho^{*},\epsilon^{*},p^{*})$, and we compute the tuple $\hat{\mathcal{T}}^{*}$ for each $\gamma$ under consideration. Also, we satisfy the detectability condition $(A, \mathbb{E}[ \boldsymbol{Z} ]^{1/2})$ in Section~\ref{subsection:application} by confirming that $\mathbb{E}[ \boldsymbol{Z} ]$ is p.d. for every $p^{*}$.

Since there does not exist an algorithm for choosing a feasible tuple $\mathcal{T}$ of parameters for Theorem~\ref{theorem:inequality_Z}, the guarantees in Corollary~\ref{corollary:SS_bounds_novel} cannot be immediately compared to those of Lemma~\ref{lemma:SS_bounds}. However, since Theorem~\ref{theorem:inequality_Z} is a generalization of Lemma~\ref{lemma:AW_inequality}, the quantities outputted by Algorithm~5 of \cite{10193790} can be used to help choose a feasible $\mathcal{T}$ for Corollary~\ref{corollary:SS_bounds_novel}. Now, we outline a procedure for choosing a feasible tuple $\mathcal{T}$. First, we set $\rho$ to be the optimal value $\rho^{*}$ outputted by Algorithm~5 of \cite{10193790}. Next, we specify the scalar $\zeta \in [0,1]$. Then, we calculate $\epsilon$ using \eqref{eqn:inequality_theorem_Z_c3}. If $\epsilon$ resides in the interval $(0,2]$ as mandated by Theorem~\ref{theorem:inequality_Z}, then $\mathcal{T}^{*} :=(d,\delta,\gamma,\rho^{*},\epsilon,p^{*},\zeta)$ is feasible. If not, we try a different value of $\zeta$ until the $\epsilon$ we compute resides in the interval $(0,2]$. Though our procedure may result in a feasible $\mathcal{T}^{*}$, it does not guarantee that we also satisfy the condition $( \, 1 - r \epsilon \, ) > 0$. We compute the tuple $\mathcal{T}$ for each $\gamma$ and $\zeta$ under consideration. In Figure~\ref{figure:comparison}, we confirm that the condition $( \, 1 - r \epsilon \, ) > 0$ also holds for every value of $\gamma$ and $\zeta$ under consideration. As mentioned in Section~\ref{subsection:application}, we must satisfy the condition $( \, 1 - r \epsilon \, ) > 0$ for the upper bound $U_{S}$ in \eqref{eqn:CI_novel_02} to be necessarily defined.

Although the upper bound $\bar{\lambda}(\bar{U}_{S})$ cannot be computed for values of $\gamma$ that are below the minimum requirement $\bar{\kappa}$, this is not the case for $\bar{\lambda}(U_{S})$ since it is derived from Theorem~\ref{theorem:inequality_Z}. Unfortunately, since the parameters in $\hat{\mathcal{T}}^{*}$ are used to help choose a feasible tuple $\mathcal{T}^{*}$, the upper bound $\bar{\lambda}(U_{S})$ cannot be plotted for values of $\gamma$ that are below $\bar{\kappa}$. This is the reason why we do not consider values of $\gamma$ below $\bar{\kappa}$ in Figure~\ref{figure:comparison}b. We conjecture that an algorithm for Theorem~\ref{theorem:inequality_Z} and, as a consequence, Corollary~\ref{corollary:SS_bounds_novel}, exists for venturing significantly below $\bar{\kappa}$.

\begin{figure}
    \centering
    \includegraphics[width=\columnwidth]{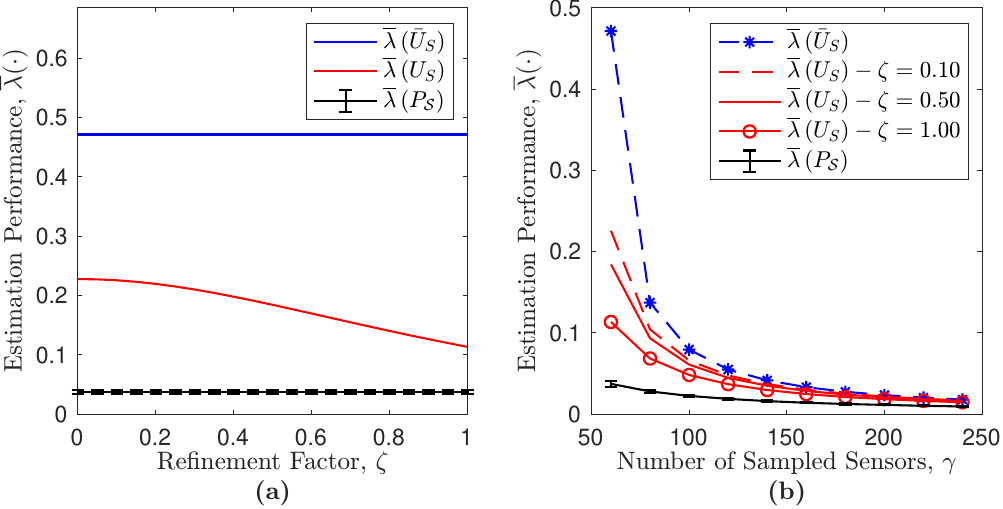}
    \caption{(a) Plot of the upper bounds on the worst-case estimation performance for varying values of the refinement factor $\zeta$. First, observe that $\overline{\lambda}( U_S )$ is a function of $\zeta \in [0,1]$. Also, observe that $\overline{\lambda}( \bar{U}_S )$ and $\overline{\lambda}( P_{\mathcal{S}} )$ are constant since they are not functions of $\zeta$. We remind the reader that $\overline{\lambda}( P_\mathcal{S} )$ is a random variable since it depends on a selection $\mathcal{S}$ that is randomly drawn via the sampling policy in Section~\ref{subsection:sampling_policy}. The average value of $\overline{\lambda}( P_{\mathcal{S}} )$ is indicated by the black curve and the variability of $\overline{\lambda}( P_\mathcal{S} )$ is captured by the standard deviation, where the error bars indicate $\pm$ one standard deviation. (b) Plot of the upper bounds on the worst-case estimation performance for varying number of sampled sensors. The quantity $\overline{\lambda}( U_S )$ is plotted for varying values of the refinement factor $\zeta$. Comments in subplot~(a), regarding the quantities $\overline{\lambda}( U_S )$ and $\overline{\lambda}( P_\mathcal{S} )$, also apply to subplot (b).
    }
    \label{figure:comparison}
\end{figure}

We refer the reader to Section~\ref{subsection:kalman_filtering} on how to the compute the steady-state solution of $P_{\mathcal{S},(t)}$. The filtered error covariance $P_{\mathcal{S},(t)}$ converges to its steady-state solution $P_{\mathcal{S}}$ in relatively few time steps.

Next, we show how to compute the semi-definite bounds $\bar{U}_{S}$ and $U_{S}$. We define $\bar{U}_{S} := \lim_{\, t \rightarrow \infty} \bar{U}_{S,(t)}$~and~$U_{S} := \lim_{\, t \rightarrow \infty} U_{S,(t)}$, where the sequences initially start at $\bar{U}_{S,(-1)}$ and $U_{S,(-1)}$, respectively. Similar to the filtered error covariance $P_{\mathcal{S},(t)}$, the quantities $\bar{U}_{S,(t)}$ and $U_{S,(t)}$ converge to their steady-state solution in relatively few time steps.

In Figure~1a, we compare the bounds $\bar{\lambda}(\bar{U}_{S})$ and $\bar{\lambda}(U_{S})$ for varying values of $\zeta$. We remind the reader that a smaller value equates to a tighter bound since we are comparing upper bounds for the worst-case estimation performance $\bar{\lambda}(P_{\mathcal{S}})$. This study is meant to observe the gradual effect that $\zeta$ has on improving the upper bound $\bar{\lambda}(U_{S})$. Also, we set $\gamma$ to be equal to the smallest value under consideration in Figure~\ref{figure:comparison}b, i.e., we  assume $\gamma = 60$. Figure 1a shows that the guarantees of Corollary~\ref{corollary:SS_bounds_novel} are shown to outperform those of Lemma~\ref{lemma:SS_bounds} with respect to the worst-case estimation performance. It also shows that $\bar{\lambda}(U_{S})$ achieves the tightest bound on $\bar{\lambda}(P_{\mathcal{S}})$ when $\zeta$ is set to its largest possible value, i.e., when $\zeta = 1$. We observe a similar trend in Figure~\ref{figure:comparison}b.

In Figure~1b, we compare the bounds $\bar{\lambda}(\bar{U}_{S})$ and $\bar{\lambda}(U_{S})$ for varying number of sampled sensors. The number ranges from $60$ to $240$. We also plot $\bar{\lambda}(U_{S})$ for varying values of $\zeta$ to observe its utility in tightening the bound. Figure~1b shows that the upper bound $\bar{\lambda}(U_{S})$ significantly outperforms $\bar{\lambda}(\bar{U}_{S})$ when the number of sampled sensors is relatively small. Similar to Figure~1a, Figure~1b also shows that $\bar{\lambda}(U_{S})$ is tighter for larger values of $\zeta$. In summary, Figure~1b shows that the guarantees of Corollary~\ref{corollary:SS_bounds_novel} outperform those of Lemma~\ref{lemma:SS_bounds} when we use the output of Algorithm~5 in \cite{10193790} to indirectly minimize our worst-case estimation performance.

We clarify that Algorithm~5 in \cite{10193790} was specifically designed to find a sampling distribution $p^{*}$ that indirectly minimized the bound $\bar{\lambda}(\bar{U}_S)$. We conjecture that the bound $\bar{\lambda}(U_{S})$ in our study can be further tightened by designing a procedure to find a sampling distribution that minimized $\bar{\lambda}(U_S)$. Such a procedure would also allow us to compute $\bar{\lambda}(U_S)$ for values of $\gamma$ that are significantly smaller than what is displayed in Figure~\ref{figure:comparison}b. Our analysis in Section~\ref{subsection:matrix_CIs} suggests it. Lastly, a large value of $\gamma$ suggests that either the candidate sensors under consideration are very noisy or a very good estimate of the state is required.





\section{Conclusion}
\label{section:conclusion}

In this work, we presented generalized CIs for a sum of i.i.d. and p.s.d. random matrices, and we applied them to the steady-state error covariance of the Kalman filter. Since our guarantees are a generalization of an existing matrix CI, we were able to show that our guarantees require significantly fewer number of sampled sensors. Also, we showed through a numerical analysis that our guarantees outperformed the existing ones in Lemma~\ref{lemma:AW_inequality}. When used to bound (in the semi-definite sense) the steady-state error covariance of the Kalman filter, we showed that our guarantees produced tighter bounds on the worst-case estimation performance.




A future direction of research consists of efficiently computing a tight upper bound $\bar{\lambda}(U_S)$ for smaller values of $\gamma$. This would require developing a polynomial-time algorithm similar to that of Algorithm~5 in \cite{10193790}.

\bibliographystyle{ieeetr}
\bibliography{references}

\appendix

We first establish several preliminary lemmas required for subsequent use in the proof of Theorem~\ref{theorem:inequality_Z}.

First, we introduce CIs for a sum of independent and symmetric random matrices. We formulate Lemma~\ref{lemma:AW_CI_01} for subsequent use in the proof of Theorem~\ref{theorem:inequality_Z}. We refer the reader to \cite{notes12} or Chapter 2 of \cite{qiu2014cognitive} for a proof of Lemma~\ref{lemma:AW_CI_01}.

\begin{lemma}
\label{lemma:AW_CI_01}
Let $( \boldsymbol{X}^{(i)} )_{\, i \in [\gamma]}$ denote a sequence of $\gamma$ independent and symmetric random matrices, i.e., $\boldsymbol{X}^{(i)} \in \mathbb{S}^{d}$ for all $i \in [\gamma]$. Define the random variable $\boldsymbol{ S_{\gamma} } := \textstyle\sum\nolimits_{\, i \in [\gamma]} \boldsymbol{X}^{(i)}$ and the events $\Theta_1 := \{ \, -\boldsymbol{ S_{\gamma} } \npreceq t \, I_d \, \}$ and $\Theta_2 := \{ \, \boldsymbol{ S_{\gamma} } \npreceq t \, I_d \, \}$.
Then, the following CIs,
\begin{gather}
\label{eqn:P_inequality_02}
\mathbb{P}[ \, \Theta_1 \, ] \leq d \, e^{-\lambda t} \, \textstyle\prod\nolimits_{\, i \in [\gamma]} \, \| \, \mathbb{E}[ \, e^{-\lambda \boldsymbol{X}^{(i)}} \, ] \, \| =: \delta_1, \\
\label{eqn:P_inequality_01}
\mathbb{P}[ \, \Theta_2 \, ] \leq d \, e^{-\lambda t} \, \textstyle\prod\nolimits_{\, i \in [\gamma]} \, \| \, \mathbb{E}[ \, e^{\, \lambda \boldsymbol{X}^{(i)}} \, ] \, \| =: \delta_2,
\end{gather}
hold for any scalars $\lambda > 0$ and $t > 0$.
\end{lemma}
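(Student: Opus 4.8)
The plan is to prove the tail bounds by the matrix Laplace-transform (Chernoff) method, following the Ahlswede--Winter template. I would establish \eqref{eqn:P_inequality_01} in full and then obtain \eqref{eqn:P_inequality_02} by symmetry. First I would rewrite the event $\Theta_2$ in spectral terms: since $\boldsymbol{S}_\gamma \npreceq t I_d$ holds exactly when $t I_d - \boldsymbol{S}_\gamma$ fails to be p.s.d., the event $\Theta_2$ coincides with $\{ \overline{\lambda}(\boldsymbol{S}_\gamma) > t \}$. This reduces a statement about the full spectrum of $\boldsymbol{S}_\gamma$ to a scalar threshold on its top eigenvalue.

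Next I would exponentiate. Fix $\lambda > 0$. Because the eigenvalues of $e^{\lambda \boldsymbol{S}_\gamma}$ are exactly $e^{\lambda \mu}$ as $\mu$ ranges over the eigenvalues of $\boldsymbol{S}_\gamma$, and $x \mapsto e^{\lambda x}$ is increasing, we have $\overline{\lambda}(e^{\lambda \boldsymbol{S}_\gamma}) = e^{\lambda \overline{\lambda}(\boldsymbol{S}_\gamma)}$, so $\Theta_2 = \{ \overline{\lambda}(e^{\lambda \boldsymbol{S}_\gamma}) > e^{\lambda t} \}$. Since $e^{\lambda \boldsymbol{S}_\gamma}$ is p.d., its largest eigenvalue is at most its trace, hence this event is contained in $\{ \operatorname{tr}(e^{\lambda \boldsymbol{S}_\gamma}) > e^{\lambda t} \}$, and Markov's inequality yields
\[
\mathbb{P}[ \, \Theta_2 \, ] \leq e^{-\lambda t} \, \mathbb{E}[ \, \operatorname{tr}(e^{\lambda \boldsymbol{S}_\gamma}) \, ].
\]

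The crux is to factor this trace-moment across the independent summands. I would induct on the partial sums $\boldsymbol{S}_k := \sum_{i=1}^k \boldsymbol{X}^{(i)}$, starting from $\boldsymbol{S}_\gamma$. Writing $\lambda \boldsymbol{S}_k = \lambda \boldsymbol{S}_{k-1} + \lambda \boldsymbol{X}^{(k)}$ and invoking the Golden--Thompson inequality $\operatorname{tr}(e^{A+B}) \le \operatorname{tr}(e^{A} e^{B})$ for symmetric $A,B$ gives $\operatorname{tr}(e^{\lambda \boldsymbol{S}_k}) \le \operatorname{tr}(e^{\lambda \boldsymbol{S}_{k-1}} e^{\lambda \boldsymbol{X}^{(k)}})$. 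Conditioning on $\boldsymbol{S}_{k-1}$ and using independence of $\boldsymbol{X}^{(k)}$ together with linearity of the trace replaces $e^{\lambda \boldsymbol{X}^{(k)}}$ by the fixed p.s.d. matrix $\mathbb{E}[e^{\lambda \boldsymbol{X}^{(k)}}]$; the inequality $\operatorname{tr}(MN) \le \|N\| \operatorname{tr}(M)$ for $M \succeq 0$ then peels off the factor $\| \mathbb{E}[e^{\lambda \boldsymbol{X}^{(k)}}] \|$ and reduces the exponent to $\boldsymbol{S}_{k-1}$. Iterating down to $\boldsymbol{S}_0 = 0_d$, where $\operatorname{tr}(e^{0_d}) = \operatorname{tr}(I_d) = d$, gives $\mathbb{E}[\operatorname{tr}(e^{\lambda \boldsymbol{S}_\gamma})] \le d \prod_{i \in [\gamma]} \| \mathbb{E}[e^{\lambda \boldsymbol{X}^{(i)}}] \|$, and combining with the Markov step establishes \eqref{eqn:P_inequality_01}.

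Finally, \eqref{eqn:P_inequality_02} follows by running the identical argument on the sequence $(-\boldsymbol{X}^{(i)})_{i \in [\gamma]}$, whose sum is $-\boldsymbol{S}_\gamma$ and whose per-summand factors become $\| \mathbb{E}[e^{-\lambda \boldsymbol{X}^{(i)}}] \|$. I expect the factorization step to be the main obstacle: because $\boldsymbol{S}_{k-1}$ and $\boldsymbol{X}^{(k)}$ need not commute, the exponential does not split multiplicatively, so the Golden--Thompson inequality is indispensable and is the single nontrivial piece of matrix analysis on which the whole argument rests. The remaining ingredients — the trace-norm bound $\operatorname{tr}(MN) \le \|N\|\operatorname{tr}(M)$ for p.s.d. $M$ and the containment induced by $\overline{\lambda}(\cdot) \le \operatorname{tr}(\cdot)$ on p.d. matrices — are elementary but must be stated carefully so the induction closes cleanly.
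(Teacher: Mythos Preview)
Your argument is correct and is precisely the standard Ahlswede--Winter matrix Laplace-transform proof via Golden--Thompson that the paper defers to by citing \cite{notes12} and Chapter~2 of \cite{qiu2014cognitive} rather than proving Lemma~\ref{lemma:AW_CI_01} itself. There is nothing to add: your treatment matches the referenced approach.
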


Next, we establish Lemma~\ref{lemma:inequality_eX} and Lemma~\ref{lemma:inequality_Y_EY} for subsequent use in the proof of Lemma~\ref{lemma:inequality_X}. We omit the standard proof of Lemma~\ref{lemma:inequality_eX} for brevity.

\begin{lemma}
\label{lemma:inequality_eX}
If $X \in \mathbb{S}_{+}^{d}$ is a p.s.d. matrix, then $\| e^{X} \| = e^{ \| X \| }$.
\end{lemma}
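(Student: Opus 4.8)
The plan is to diagonalize $X$ and then push the exponential through the spectral decomposition, reducing both sides to a statement about scalars. Since $X \in \mathbb{S}_+^d \subseteq \mathbb{S}^d$, the spectral theorem furnishes an orthogonal matrix $U$ and a diagonal matrix $\Lambda = \mathrm{diag}(\lambda_1,\ldots,\lambda_d)$ with $X = U \Lambda U^T$, where each $\lambda_i \geq 0$ because $X$ is p.s.d. First I would invoke the power-series definition $e^X = \sum_{k \geq 0} X^k / k!$ together with $X^k = U \Lambda^k U^T$ to conclude $e^X = U \, e^{\Lambda} \, U^T = U \, \mathrm{diag}(e^{\lambda_1},\ldots,e^{\lambda_d}) \, U^T$; in particular $e^X \in \mathbb{S}_{++}^d$ and its eigenvalues are exactly $\{ e^{\lambda_i} \}_{i \in [d]}$.

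Next I would use the fact that the spectral norm of any symmetric matrix equals the maximum absolute value of its eigenvalues. Applied to $e^X$, whose eigenvalues $e^{\lambda_i}$ are all positive, this gives $\| e^X \| = \max_{i \in [d]} e^{\lambda_i}$. Since $t \mapsto e^t$ is strictly increasing, $\max_{i} e^{\lambda_i} = e^{\max_i \lambda_i} = e^{\,\overline{\lambda}(X)}$. Finally, applying the same spectral-norm fact to $X$ and using $\lambda_i \geq 0$ yields $\| X \| = \max_i |\lambda_i| = \max_i \lambda_i = \overline{\lambda}(X)$. Chaining these equalities gives $\| e^X \| = e^{\overline{\lambda}(X)} = e^{\| X \|}$, as claimed.

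There is no real obstacle here; the only point that genuinely uses the p.s.d. hypothesis (rather than mere symmetry) is the step $\| X \| = \overline{\lambda}(X)$, which would fail if $X$ had a large-magnitude negative eigenvalue. I would just flag that this is where non-negativity of the spectrum is needed, and otherwise present the argument in the few lines above.
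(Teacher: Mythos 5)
Your argument is correct and complete: it is precisely the standard spectral-decomposition proof that the paper explicitly omits for brevity (``We omit the standard proof of Lemma~3''). Your remark that the p.s.d.\ hypothesis is what guarantees $\|X\| = \overline{\lambda}(X)$ --- and that the identity would fail for a symmetric matrix with a dominant negative eigenvalue --- correctly identifies the one place the hypothesis is used.
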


\begin{lemma}
\label{lemma:inequality_Y_EY}
Let $\boldsymbol{Y} \hspace{-0.75mm} \in\hspace{-0.25mm} \mathbb{S}_{+}^{d}$ denote a p.s.d. random matrix. If the semi-definite inequality $\boldsymbol{Y} \hspace{-0.50mm} \preceq \hspace{-0.50mm} \rho \, I_d$ holds almost surely for a scalar $\rho > 0$, then $\| \, \boldsymbol{Y} - \mathbb{E}[ \boldsymbol{Y} ] \, \| \leq \rho$ almost surely.
\end{lemma}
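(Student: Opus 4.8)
The plan is a short, self-contained Löwner-order argument. First I would record what the hypothesis gives pointwise: since $\boldsymbol{Y} \in \mathbb{S}_{+}^{d}$ we have $\boldsymbol{Y} \succeq 0_d$, and by assumption $\boldsymbol{Y} \preceq \rho \, I_d$ almost surely, so $0_d \preceq \boldsymbol{Y} \preceq \rho \, I_d$ on a probability-one event. Next I would push this chain through the expectation. Because the expectation of a random matrix is taken entrywise and the Löwner order is preserved under expectation (for any fixed unit vector $v$, $v^{T} \boldsymbol{Y} v$ is a nonnegative scalar random variable bounded by $\rho$ almost surely, hence $0 \le \mathbb{E}[ v^{T} \boldsymbol{Y} v ] = v^{T} \mathbb{E}[\boldsymbol{Y}] v \le \rho$), we obtain $0_d \preceq \mathbb{E}[\boldsymbol{Y}] \preceq \rho \, I_d$.

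The second step is to combine the pointwise bound on $\boldsymbol{Y}$ with the bound on $\mathbb{E}[\boldsymbol{Y}]$. On the probability-one event where $0_d \preceq \boldsymbol{Y} \preceq \rho \, I_d$, subtracting $\mathbb{E}[\boldsymbol{Y}]$ (a fixed matrix) from the inequality $\boldsymbol{Y} \preceq \rho \, I_d$ and using $\mathbb{E}[\boldsymbol{Y}] \succeq 0_d$ gives $\boldsymbol{Y} - \mathbb{E}[\boldsymbol{Y}] \preceq \rho \, I_d$; likewise subtracting $\mathbb{E}[\boldsymbol{Y}]$ from $\boldsymbol{Y} \succeq 0_d$ and using $\mathbb{E}[\boldsymbol{Y}] \preceq \rho \, I_d$ gives $\boldsymbol{Y} - \mathbb{E}[\boldsymbol{Y}] \succeq -\rho \, I_d$. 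Hence $-\rho \, I_d \preceq \boldsymbol{Y} - \mathbb{E}[\boldsymbol{Y}] \preceq \rho \, I_d$ almost surely.

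Finally, I would invoke the standard fact that for a symmetric matrix $M \in \mathbb{S}^{d}$, the two-sided relation $-c \, I_d \preceq M \preceq c \, I_d$ is equivalent to all eigenvalues of $M$ lying in $[-c, c]$, i.e.\ to $\| M \| \le c$. Applying this with $M = \boldsymbol{Y} - \mathbb{E}[\boldsymbol{Y}]$ (which is symmetric, being a difference of p.s.d.\ matrices) and $c = \rho$ yields $\| \boldsymbol{Y} - \mathbb{E}[\boldsymbol{Y}] \| \le \rho$ almost surely, as claimed.

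I do not anticipate a genuine obstacle here; the only points requiring a sentence of care are the monotonicity of expectation with respect to $\preceq$ (handled by testing against unit vectors) and the equivalence between a sandwiched semi-definite bound and a spectral-norm bound for symmetric matrices. Everything else is immediate arithmetic in the Löwner order.
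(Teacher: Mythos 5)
Your proof is correct and follows essentially the same route as the paper's: bound $\mathbb{E}[\boldsymbol{Y}]$ between $0_d$ and $\rho\, I_d$, sandwich $\boldsymbol{Y}-\mathbb{E}[\boldsymbol{Y}]$ between $\pm\rho\, I_d$, and convert the two-sided semi-definite bound into the spectral-norm bound. The only (immaterial) difference is that the paper justifies $0_d \preceq \mathbb{E}[\boldsymbol{Y}] \preceq \rho\, I_d$ via the finite-support convex combination $\mathbb{E}[\boldsymbol{Y}]=\sum_i p_{y,i}\,\mathcal{Y}_i$, whereas you use monotonicity of expectation tested against unit vectors, which is marginally more general.
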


\begin{proof}
Let $p_y \in \Delta^n$, $\mathsf{S}_y := \{ \mathcal{Y}_i \}_{\, i \in [n]}$, and $\mathbb{E}[ \boldsymbol{Y} ]$ denote the probability distribution, support, and expectation of random variable~$\boldsymbol{Y}$, respectively. Also, we define $\boldsymbol{\Phi} \hspace{-0.25mm} := \hspace{-0.25mm} ( \hspace{0.25mm} \boldsymbol{Y} - \mathbb{E}[ \boldsymbol{Y} ] \hspace{0.25mm} )$. First, observe that satisfying the inequality $\| \boldsymbol{\Phi} \| \leq \rho$ almost surely is equivalent to satisfying the following inequalities, $\boldsymbol{\Phi} \preceq \rho \, I_d$ and $\boldsymbol{\Phi} \succeq \rho \, I_d$, almost surely. The latter holds by assuming $\boldsymbol{Y} \preceq \rho \, I_d$ almost surely and employing the facts: (i)~$\textstyle\sum\nolimits_{\, i \in [n]} p_{y,i} = 1$, (ii)~$\mathbb{E}[ \boldsymbol{Y} ] = \textstyle\sum\nolimits_{\, i \in [n]} \, p_{y,i} \, \mathcal{Y}_i$, and (iii)~$\mathcal{Y}_i \succeq 0_d$ for all $i \in [n]$ since $\boldsymbol{Y} \in \mathbb{S}_{+}^{d}$.
\end{proof}

Next, we establish the scalar inequalities \eqref{eqn:inequality_00} and \eqref{eqn:inequality_01} in Lemma~\ref{lemma:inequality_X} for subsequent use in the proof of Lemma~\ref{lemma:inequality_Y}.

\begin{lemma}
\label{lemma:inequality_X}
Assume the random variable~$\boldsymbol{Y} \in \mathbb{S}_{+}^{d}$, a p.s.d. random matrix, satisfies the following inequalities, 
\begin{align}
\label{eqn:Y_rho}
\boldsymbol{Y} \preceq \rho \, I_d
\end{align}
almost surely for a scalar $\rho > 0$ and
\begin{align}
\label{eqn:inequality_L_Y_I}
\zeta \, \mathcal{I}_{\tilde{y}} \preceq \mathbb{E}[ \boldsymbol{Y} ] &\preceq \mathcal{I}_{\tilde{y}}
\end{align}
for a scalar $\zeta \in [0,1]$, where $\mathcal{I}_{\tilde{y}} := \tilde{Y} \tilde{Y}^{+}$ and $\tilde{Y} := \mathbb{E}[ \boldsymbol{Y} ]$. Assume the scalars $\rho$ and $\zeta$ satisfy the inequality $\rho \geq \zeta^2$. Define the random variable $\boldsymbol{X} := (\, \boldsymbol{Y} - \mathbb{E}[\boldsymbol{Y}] \,) \, / \, \rho$ and the scalar $\tilde{\rho} := ( \, 1 / \rho - \zeta^2 / \rho^2 \, )$. Then,
\begin{align}
\label{eqn:inequality_00}
\big{\|} \, \mathbb{E}[ \, e^{-\lambda \boldsymbol{X}} \, ] \, \big{\|} &\leq e^{\lambda^2 \tilde{\rho}}, \\
\label{eqn:inequality_01}
\big{\|} \, \mathbb{E}[ \, e^{ \, \lambda \boldsymbol{X} } \, ] \, \big{\|} &\leq e^{\lambda^2 \tilde{\rho}},
\end{align}
for any scalar $\lambda \in [0,1]$.
\end{lemma}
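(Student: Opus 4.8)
The plan is to bound the matrix moment generating functions $\mathbb{E}[e^{\pm\lambda\boldsymbol{X}}]$ by exploiting the fact that $\boldsymbol{X}=(\boldsymbol{Y}-\mathbb{E}[\boldsymbol{Y}])/\rho$ is a bounded, zero-mean symmetric random matrix and then invoking a standard scalar/matrix convexity estimate of the form $e^{u}\preceq I + u + u^2$ (valid on a suitable interval) together with operator monotonicity of the norm. I will treat the two inequalities \eqref{eqn:inequality_00} and \eqref{eqn:inequality_01} symmetrically, since replacing $\lambda$ by $-\lambda$ merely swaps them and the bound on the right is even in $\lambda$; so it suffices to prove \eqref{eqn:inequality_01}.

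First I would establish an almost-sure operator bound on $\boldsymbol{X}$. From \eqref{eqn:Y_rho} and Lemma~\ref{lemma:inequality_Y_EY} we get $\|\boldsymbol{Y}-\mathbb{E}[\boldsymbol{Y}]\|\le\rho$ almost surely, hence $\|\boldsymbol{X}\|\le 1$ and therefore $-I_d\preceq\boldsymbol{X}\preceq I_d$ a.s. Since $\lambda\in[0,1]$, the eigenvalues of $\lambda\boldsymbol{X}$ lie in $[-1,1]$, which is the regime where the scalar inequality $e^{x}\le 1+x+x^2$ holds; by the usual spectral-calculus argument this lifts to $e^{\lambda\boldsymbol{X}}\preceq I_d+\lambda\boldsymbol{X}+\lambda^2\boldsymbol{X}^2$ a.s. Taking expectations and using $\mathbb{E}[\boldsymbol{X}]=0_d$ gives $\mathbb{E}[e^{\lambda\boldsymbol{X}}]\preceq I_d+\lambda^2\,\mathbb{E}[\boldsymbol{X}^2]$.

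Next I would control $\|\mathbb{E}[\boldsymbol{X}^2]\|$ and show it is at most $\tilde{\rho}=1/\rho-\zeta^2/\rho^2$. Writing $\boldsymbol{X}^2=(\boldsymbol{Y}-\tilde{Y})^2/\rho^2$ and expanding, $\mathbb{E}[(\boldsymbol{Y}-\tilde{Y})^2]=\mathbb{E}[\boldsymbol{Y}^2]-\tilde{Y}^2$. The bound $\boldsymbol{Y}\preceq\rho I_d$ together with $\boldsymbol{Y}\succeq 0_d$ yields $\boldsymbol{Y}^2\preceq\rho\,\boldsymbol{Y}$ a.s. (multiplying $\boldsymbol{Y}\preceq\rho I_d$ on both sides by $\boldsymbol{Y}^{1/2}$), so $\mathbb{E}[\boldsymbol{Y}^2]\preceq\rho\,\tilde{Y}$. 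Hence $\mathbb{E}[(\boldsymbol{Y}-\tilde{Y})^2]\preceq\rho\,\tilde{Y}-\tilde{Y}^2$. Now the left inequality in \eqref{eqn:inequality_L_Y_I}, $\zeta\,\mathcal{I}_{\tilde{y}}\preceq\tilde{Y}$, combined with the right one $\tilde{Y}\preceq\mathcal{I}_{\tilde{y}}$, lets me bound the eigenvalues of $\rho\tilde{Y}-\tilde{Y}^2$: on the range of $\tilde{Y}$ each eigenvalue $\mu$ of $\tilde{Y}$ satisfies $\zeta\le\mu\le 1$, and the scalar map $\mu\mapsto\rho\mu-\mu^2$ over $[\zeta,1]$ attains its maximum at an endpoint or at $\mu=\rho/2$; since $\rho\ge\zeta^2$ (and, from the way the lemma is used, effectively $\rho\ge 1\ge\mu$), the relevant bound is $\rho\mu-\mu^2\le\rho-\zeta^2$ — this is where $\rho\ge\zeta^2$ is consumed. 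Dividing by $\rho^2$ gives $\|\mathbb{E}[\boldsymbol{X}^2]\|\le(\rho-\zeta^2)/\rho^2=\tilde{\rho}$, so $\mathbb{E}[\boldsymbol{X}^2]\preceq\tilde{\rho}\,I_d$. Therefore $\mathbb{E}[e^{\lambda\boldsymbol{X}}]\preceq(1+\lambda^2\tilde{\rho})I_d\preceq e^{\lambda^2\tilde{\rho}}I_d$, and taking norms yields \eqref{eqn:inequality_01}; the argument for \eqref{eqn:inequality_00} is identical with $\lambda\mapsto-\lambda$.

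The main obstacle I anticipate is the careful pseudo-inverse bookkeeping when $\mathbb{E}[\boldsymbol{Y}]$ is only p.s.d.\ rather than p.d.: all the operator inequalities above must be read on the range of $\tilde{Y}$, with $\mathcal{I}_{\tilde{y}}=\tilde{Y}\tilde{Y}^{+}$ acting as the identity there, and one must check that $\boldsymbol{Y}$, $\mathbb{E}[\boldsymbol{Y}^2]$, and $\boldsymbol{X}^2$ all have range contained in that of $\tilde{Y}$ so that the endpoint analysis of $\mu\mapsto\rho\mu-\mu^2$ on $[\zeta,1]$ is legitimate. The secondary subtlety is justifying the extremal value of that scalar quadratic under the standing hypothesis $\rho\ge\zeta^2$ alone (versus the stronger $\rho\ge 1$ that the paper's later applications supply); I would isolate this as a short scalar lemma to keep the matrix argument clean.
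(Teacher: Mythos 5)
Your proof is correct and follows essentially the same route as the paper's: establish $\mathbb{E}[\boldsymbol{X}]=0_d$, $\|\boldsymbol{X}\|\leq 1$, and $\mathbb{E}[\boldsymbol{X}^2]\preceq\tilde{\rho}\,I_d$, then apply $e^{X}\preceq I_d+X+X^2$ and $I_d+X\preceq e^{X}$ and take norms. The only differences are cosmetic — you bound the eigenvalues of $\rho\tilde{Y}-\tilde{Y}^2$ by maximizing the quadratic $\mu\mapsto\rho\mu-\mu^2$ over $[\zeta,1]$ (where the simpler observation $\rho(1-\mu)+(\mu^2-\zeta^2)\geq 0$ suffices and needs neither $\rho\geq 1$ nor a vertex analysis), whereas the paper bounds the two terms separately and passes through $e^{\lambda^2\mathbb{E}[\boldsymbol{X}^2]}$ via its Lemma~\ref{lemma:inequality_eX}.
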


\begin{proof}
In this proof, we employ Lemma~\ref{lemma:inequality_eX} and Lemma~\ref{lemma:inequality_Y_EY} to derive inequality~\eqref{eqn:inequality_00}. The proof of inequality~\eqref{eqn:inequality_01} is omitted since it follows a similar derivation.

First, we observe that the quantity $\mathcal{I}_{\tilde{y}}$ can be factorized as $\mathcal{I}_{\tilde{y}} = Q_{\tilde{y}} \ \mathrm{diag}[ \ I_{\mathrm{rank}(\tilde{Y})} \, , \, 0_{( d - \mathrm{rank}(\tilde{Y}) )} \ ] \ Q_{\tilde{y}}^T$, where $Q_{\tilde{y}}$ is an orthogonal matrix. This is a consequence of the spectral decomposition theorem. An implication is that $\mathcal{I}_{\tilde{y}} \preceq I_d$. Also, observe that $\tilde{Y}$ can be factorized as $\tilde{Y} = Q_{\tilde{y}} \, D_{\tilde{y}} \, Q_{\tilde{y}}^T$, where $D_{\tilde{y}}$ is a diagonal matrix. Furthermore, if the matrix $\tilde{Y}$ is full rank, then $\mathcal{I}_{\tilde{y}}$ reduces to the identity matrix $I_d$.

Next, we specify three properties of $\boldsymbol{X}$. Note that Lemma~\ref{lemma:inequality_Y_EY} and the assumptions required to use Lemma~\ref{lemma:inequality_X} imply the~following, $\mathbb{E}[ \boldsymbol{X} ] = 0$, $\| \boldsymbol{X} \| \leq 1$, and $\mathbb{E}[ \boldsymbol{X}^2 ] \preceq \tilde{\rho} \, I_d$. The first implication follows from the definition of $\boldsymbol{X}$. The second implication follows from assuming $\rho > 0$, using the definition of $\boldsymbol{X}$, and employing Lemma~\ref{lemma:inequality_Y_EY}. The third implication follows from using the following, definitions, $\boldsymbol{X}$ and $\tilde{\rho}$, and employing the facts: (i) $\mathbb{E}[ \boldsymbol{Y}^2 ] \preceq \mathbb{E}[ \, \| \boldsymbol{Y} \| \, \boldsymbol{Y} \, ]$ and (ii) $\mathcal{I}_{\tilde{y}} \preceq I_d$. The third implication also assumes that \eqref{eqn:inequality_L_Y_I} and the inequality $\| \boldsymbol{Y} \| \leq \rho$ almost surely hold. Note that the assumption $\boldsymbol{Y} \preceq \rho \, I_d$ of Lemma~\ref{lemma:inequality_X} implies $\| \boldsymbol{Y} \| \leq \rho$ since $\boldsymbol{Y} \in \mathbb{S}_{+}^{d}$. Note also that the third property $\mathbb{E}[ \boldsymbol{X}^2 ] \preceq \tilde{\rho} \, I_d$ of random variable~$\boldsymbol{X}$ is valid if $\tilde{\rho} \geq 0$ and non-trivial if $\tilde{\rho} > 0$ $\Leftrightarrow$ $\rho > \zeta^2$. Thus, the condition $\rho > \zeta^2$ is a requirement.

Next, we specify properties about the matrix exponential function. If $X \in \mathbb{S}^d$ is a symmetric matrix, then
\begin{align}
\label{eqn:inequality_eX_01}
I_d + X \preceq e^{X}.
\end{align}
If the secondary condition $\| X \| \leq 1$ also holds, then
\begin{align}
\label{eqn:inequality_eX_02}
e^{X} \preceq I + X + X^2.
\end{align}
We refer the reader to \cite{notes13} for a proof of \eqref{eqn:inequality_eX_01} and \eqref{eqn:inequality_eX_02}.

Next, we employ the inequalities \eqref{eqn:inequality_eX_01} and \eqref{eqn:inequality_eX_02} to provide an upper bound on the expectation $\mathbb{E}[\, e^{-\lambda \boldsymbol{X}} \,]$, i.e.,
\begin{gather}
\mathbb{E}[\, e^{-\lambda \boldsymbol{X}} \,] 
\overset{(i)}{\preceq} \mathbb{E}[ \, I_d + (-\lambda) \boldsymbol{X} + (-\lambda)^2 \boldsymbol{X}^2 \, ] \nonumber \\
\label{eqn:inequality_eX_03}
\overset{(j)}{=} I_d + \lambda^2 \, \mathbb{E}[ \boldsymbol{X}^2 ] \overset{(k)}{\preceq} e^{\, \lambda^2 \, \mathbb{E}[\boldsymbol{X}^2]}
\end{gather}
where step~(i) and step~(k) employ \eqref{eqn:inequality_eX_02} and \eqref{eqn:inequality_eX_01}, respectively, step~(i) assumes $\| \lambda \boldsymbol{X} \| \leq 1$, and step~(j) employs the fact $\mathbb{E}[ \boldsymbol{X} ] = 0$. Note that the second property $\| \boldsymbol{X} \| \leq 1$ of random variable~$\boldsymbol{X}$ implies $\| \lambda \boldsymbol{X} \| \leq 1$ since $\lambda \in [0,1]$.

Finally, we obtain inequality~\eqref{eqn:inequality_00}, i.e.,
\begin{align*}
\big{\|} \, \mathbb{E}[ \, e^{-\lambda \boldsymbol{X}} \, ] \, \big{\|} 
\overset{(l)}{\leq} \big{\|} \, e^{\, \lambda^2 \, \mathbb{E}[\boldsymbol{X}^2]} \, \big{\|}
\overset{(m)}{=} e^{ \, \lambda^2 \, \| \mathbb{E}[\boldsymbol{X}^2] \| }
\overset{(n)}{\leq} e^{\, \lambda^2 \tilde{\rho}},
\end{align*}
where step~(l) employs inequality~\eqref{eqn:inequality_eX_03}, step~(m) employs the equality of Lemma~\ref{lemma:inequality_eX}, and step~(n) assumes $\| \hspace{0.25mm} \mathbb{E}[ \boldsymbol{X}^2 ] \hspace{0.25mm} \| \leq \tilde{\rho}$. Note that the third property $\mathbb{E}[ \boldsymbol{X}^2 ] \preceq \tilde{\rho} \, I_d$ of random variable~$\boldsymbol{X}$ implies $\| \hspace{0.25mm} \mathbb{E}[ \boldsymbol{X}^2 ] \hspace{0.25mm} \| \leq \tilde{\rho}$ since $\mathbb{E}[ \boldsymbol{X}^2 ] \in \mathbb{S}_{+}^{d}$
\end{proof}

Next, we derive CIs for a sum of independent and p.s.d. random matrices in Lemma~\ref{lemma:inequality_Y}.

\begin{lemma}
\label{lemma:inequality_Y}
Let $( \hspace{0.25mm} \boldsymbol{Y}_i \hspace{0.25mm} )_{\, i \in [\gamma]}$ denote a sequence of i.i.d. and p.s.d. random matrices, where $\boldsymbol{Y}_i$ is the $i$-th copy of random matrix $\boldsymbol{Y} \in \mathbb{S}_{+}^{d}$. Assume the sequence satisfies \eqref{eqn:Y_rho} almost surely and \eqref{eqn:inequality_L_Y_I} for the scalars $\rho > 0$ and $\zeta \in [0,1]$, and the scalars $\rho$ and $\zeta$ satisfy the inequality $\rho \geq \zeta^2$. If the equality~\eqref{eqn:inequality_theorem_Z_c3} holds for the scalars $\delta \in (0,1)$ and $\epsilon \in (0,2]$, then the events
\begin{gather}
\label{eqn:inequality_Y}
\left\{ \, \gamma \, \mathbb{E}[ \boldsymbol{Y} ] - \gamma \hspace{0.25mm} r \hspace{0.25mm} \epsilon \, I_d \preceq \textstyle\sum\nolimits_{\, i \in [\gamma]} \boldsymbol{Y}_i \, \right\}, \\
\label{eqn:inequality_Y_RHS}
\left\{ \, \textstyle\sum\nolimits_{\, i \in [\gamma]} \boldsymbol{Y}_i \, \preceq \, \gamma \hspace{0.25mm} r \hspace{0.25mm} \epsilon \, I_d + \gamma \, \mathbb{E}[ \boldsymbol{Y} ] \, \right\},
\end{gather}
each occur with probability at least $(1-\delta)$.
\end{lemma}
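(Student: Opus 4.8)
The plan is to establish Lemma~\ref{lemma:inequality_Y} by invoking the matrix Ahlswede--Winter bounds of Lemma~\ref{lemma:AW_CI_01} with the centered and rescaled summands supplied by Lemma~\ref{lemma:inequality_X}. Concretely, for each $i \in [\gamma]$ define $\boldsymbol{X}^{(i)} := ( \boldsymbol{Y}_i - \mathbb{E}[\boldsymbol{Y}] ) / \rho$, which is symmetric, i.i.d., and by Lemma~\ref{lemma:inequality_X} satisfies $\| \mathbb{E}[e^{\pm \lambda \boldsymbol{X}^{(i)}}] \| \leq e^{\lambda^2 \tilde\rho}$ for every $\lambda \in [0,1]$, where $\tilde\rho = (1/\rho - \zeta^2/\rho^2)$. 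Then $\boldsymbol{S}_\gamma = \sum_{i \in [\gamma]} \boldsymbol{X}^{(i)} = \big( \sum_{i\in[\gamma]} \boldsymbol{Y}_i - \gamma \mathbb{E}[\boldsymbol{Y}] \big)/\rho$, so the event $\{ \boldsymbol{S}_\gamma \npreceq t\, I_d \}$ is precisely the complement of the upper-tail event in \eqref{eqn:inequality_Y_RHS} after choosing $t$ appropriately, and likewise $\{ -\boldsymbol{S}_\gamma \npreceq t\, I_d \}$ matches the lower-tail event in \eqref{eqn:inequality_Y}.

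The key computational step is then to feed the exponential-moment bound into \eqref{eqn:P_inequality_01}: since the summands are i.i.d., $\prod_{i \in [\gamma]} \| \mathbb{E}[e^{\lambda \boldsymbol{X}^{(i)}}] \| \leq e^{\gamma \lambda^2 \tilde\rho}$, so $\mathbb{P}[\Theta_2] \leq d\, e^{-\lambda t + \gamma \lambda^2 \tilde\rho}$, and symmetrically for $\Theta_1$ using \eqref{eqn:P_inequality_02}. The standard move is to optimize over $\lambda$: the exponent $-\lambda t + \gamma \lambda^2 \tilde\rho$ is minimized at $\lambda^\star = t/(2\gamma\tilde\rho)$, giving bound $d\, e^{-t^2/(4\gamma\tilde\rho)}$ — but one must check that $\lambda^\star \in [0,1]$, which forces a constraint $t \leq 2\gamma\tilde\rho$; if this fails one instead takes $\lambda = 1$. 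I would choose $t = \gamma\, r\, \epsilon / \rho$ (so that $\rho t = \gamma r \epsilon$ recovers the $\gamma r \epsilon\, I_d$ slack in the statement) and verify that with this choice, together with the defining equality \eqref{eqn:inequality_theorem_Z_c3} relating $r\gamma\epsilon^2/\rho$ to $4\log_e(d/\delta)$, the tail bound $d\, e^{-t^2/(4\gamma\tilde\rho)}$ collapses exactly to $\delta$. The algebra here is the crux: writing $\tilde\rho = r/\rho^2$ (since $r = 1 - \zeta^2/\rho$ gives $r/\rho = 1/\rho - \zeta^2/\rho^2 = \tilde\rho\cdot$... one must be careful, $\tilde\rho = (1/\rho)(1 - \zeta^2/\rho) = r/\rho$), so $t^2/(4\gamma\tilde\rho) = (\gamma r\epsilon/\rho)^2/(4\gamma r/\rho) = \gamma r \epsilon^2/(4\rho) = \log_e(d/\delta)$ by \eqref{eqn:inequality_theorem_Z_c3}, hence $d\, e^{-t^2/(4\gamma\tilde\rho)} = d \cdot (\delta/d) = \delta$.

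The remaining bookkeeping is to confirm $\lambda^\star \in (0,1]$ under the hypotheses. With $\lambda^\star = t/(2\gamma\tilde\rho) = (\gamma r\epsilon/\rho)/(2\gamma r/\rho) = \epsilon/2$, and since $\epsilon \in (0,2]$ by hypothesis, we get $\lambda^\star \in (0,1]$ automatically — this is precisely why the interval for $\epsilon$ is $(0,2]$ rather than $(0,1)$ as in Lemma~\ref{lemma:AW_inequality}. Finally, one translates the tail events back: $\mathbb{P}[\Theta_2] \leq \delta$ means the complement of $\{ \boldsymbol{S}_\gamma \npreceq t I_d \}$, namely $\{ \boldsymbol{S}_\gamma \preceq t I_d \} = \{ \sum_i \boldsymbol{Y}_i \preceq \rho t\, I_d + \gamma\mathbb{E}[\boldsymbol{Y}] \} = \{ \sum_i \boldsymbol{Y}_i \preceq \gamma r \epsilon\, I_d + \gamma \mathbb{E}[\boldsymbol{Y}] \}$, holds with probability at least $1-\delta$, which is \eqref{eqn:inequality_Y_RHS}; the event \eqref{eqn:inequality_Y} follows identically from $\mathbb{P}[\Theta_1] \leq \delta$. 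I expect the main obstacle to be purely one of careful constant-tracking — ensuring the substitution $\tilde\rho = r/\rho$, the choice $t = \gamma r\epsilon/\rho$, and the defining equality \eqref{eqn:inequality_theorem_Z_c3} interlock to produce exactly $\delta$ and exactly $\lambda^\star = \epsilon/2$ — rather than any conceptual difficulty, since all the heavy lifting (the exponential moment bound) is already done in Lemma~\ref{lemma:inequality_X}.
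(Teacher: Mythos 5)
Your proposal is correct and follows essentially the same route as the paper's proof: center and rescale via $\boldsymbol{X}^{(i)} = (\boldsymbol{Y}_i - \mathbb{E}[\boldsymbol{Y}])/\rho$, feed the moment bound of Lemma~\ref{lemma:inequality_X} into Lemma~\ref{lemma:AW_CI_01}, optimize at $\lambda^\star = \epsilon/2$ with $t = \gamma r \epsilon/\rho$, and use \eqref{eqn:inequality_theorem_Z_c3} (together with $\tilde\rho = r/\rho$) to collapse the tail bound to $\delta$. The paper packages the same algebra through the shorthand $\phi := \gamma r/\rho$, but the choices of $t$, $\lambda^\star$, and the translation of the tail events back to \eqref{eqn:inequality_Y} and \eqref{eqn:inequality_Y_RHS} are identical to yours.
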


\begin{proof}
In this proof, we rely on Lemma~\ref{lemma:AW_CI_01} and Lemma~\ref{lemma:inequality_X} to show that the event \eqref{eqn:inequality_Y} occurs with probability at least $(1-\delta)$. The proof for event~\eqref{eqn:inequality_Y_RHS} is omitted since it follows a similar derivation. We assume the sequence $( \boldsymbol{X}^{(i)} )_{i \in [\gamma]}$ of Lemma~\ref{lemma:AW_CI_01} consists of i.i.d. and p.s.d. random matrices, where each matrix $\boldsymbol{X}^{(i)}$ in the sequence is a copy of the random variable $\boldsymbol{X}$ of Lemma~\ref{lemma:inequality_X}. We denote $\boldsymbol{X}_i$ and $\boldsymbol{Y}_i$ as the $i$-th copy of random variables $\boldsymbol{X}$ and $\boldsymbol{Y}$, respectively.

First, we employ Lemma~\ref{lemma:AW_CI_01} and Lemma~\ref{lemma:inequality_X} to obtain an upper bound on the probability $\mathbb{P} \left[ -\boldsymbol{ S_{\gamma} } \npreceq t \, I_d \, \right]$, i.e.,
\begin{gather}
\mathbb{P}\big{[} -\boldsymbol{S_{\gamma}} \npreceq t \, I_d \, \big{]}
\overset{(a)}{\leq} d \, e^{-\lambda t} \, \textstyle\prod\nolimits_{\, i \in [ \gamma ]} \| \, \mathbb{E}[ \, e^{-\lambda \boldsymbol{X}^{(i)}} \, ] \, \| \nonumber \\
\label{eqn:inequality_P_01}
\overset{(b)}{\leq} d \, e^{-\lambda t} \, \textstyle\prod\nolimits_{\, i \in [ \gamma ]} e^{\, \lambda^2 r / \rho }
\overset{(c)}{=} d \, e^{ -\lambda t + \lambda^2 \phi },
\end{gather}
where step~(a) restates \eqref{eqn:P_inequality_02} of Lemma~\ref{lemma:AW_CI_01}, step~(b) employs \eqref{eqn:inequality_00} of Lemma~\ref{lemma:inequality_X}, and step~(c) assumes $\phi := \gamma \hspace{0.25mm} r / \rho$. Note that the quantity $\lambda$ must satisfy the following conditions, $\lambda > 0$ and $\lambda \in [0,1]$, as a consequence of employing Lemma~\ref{lemma:AW_CI_01} and Lemma~\ref{lemma:inequality_X}, respectively. Thus, the quantity $\lambda$ must satisfy the following condition, $\lambda \in (0,1]$. Note that the guarantees \eqref{eqn:P_inequality_01} and \eqref{eqn:inequality_01} are employed instead of \eqref{eqn:P_inequality_02} and \eqref{eqn:inequality_00} for steps~(a) and (b), respectively, for the proof of event~\eqref{eqn:inequality_Y_RHS}.

Next, we minimize the RHS of step~(c) by substituting the value of $\lambda$ for the optimal value $\lambda^{*} = t \hspace{0.25mm} / \hspace{0.25mm} 2 \phi$, i.e., 
\begin{gather}
d \, e^{ -\lambda t + \lambda^2 \phi } \, |_{\, \lambda = \lambda^{*}} = d \, e^{ -\lambda^{*} \hspace{0.25mm} t + {\lambda^{*}}^2 \hspace{0.25mm} \phi } 
= d \, e^{ - t^2 / 4 \phi } \nonumber \\
\label{eqn:inequality_P_02}
\overset{(d)}{=} d \, e^{ - \epsilon^2 \phi \hspace{0.25mm} / \hspace{0.25mm} 4 }
\overset{(e)}{=} d \, e^{ - \epsilon^2 \gamma \, r \hspace{0.25mm} / \hspace{0.25mm} 4 \rho }
\overset{(f)}{=} \delta,
\end{gather}
where step~(d) assumes $t = \epsilon \hspace{0.25mm} \phi$ and $\epsilon \in (0,2]$, step~(e) employs the definition of $\phi$, and step~(f) assumes there exists a scalar $\delta \in (0,1)$ that is equal to the RHS of step~(e). Note that equation \eqref{eqn:inequality_theorem_Z_c3} of Theorem~\ref{theorem:inequality_Z} is an alternate formulation of the equality $\delta = d \, e^{ - \epsilon^2 \gamma \, r \hspace{0.25mm} / \hspace{0.25mm} 4 \rho }$. Also, the optimal value $\lambda^{*}$ is only defined if $\phi > 0$ $\Leftrightarrow$ $\rho > \zeta^2$ and satisfies the following, $\lambda^{*} = t \hspace{0.25mm} / \hspace{0.25mm} 2 \phi = \epsilon/2$. Thus, the condition $\epsilon \in (0,2]$ implies $\lambda^{*} \in (0,1]$ since $\phi > 0$. Also, observe that \eqref{eqn:inequality_P_01} and \eqref{eqn:inequality_P_02} imply the CI $\mathbb{P}\big{[} -\boldsymbol{ S_{\gamma} } \npreceq t \, I_d \, \big{]} \leq \delta$. Finally, the CI is equivalent to
\begin{align}
\label{eqn:inequality_03}
\mathbb{P}\big{[} -\boldsymbol{ S_{\gamma} } \preceq t \, I_d \, \big{]} \geq ( \hspace{0.25mm} 1 - \delta \hspace{0.25mm} ).
\end{align}

Finally, the event $\{ \, -\boldsymbol{ S_{\gamma} } \preceq t \, I_d \, \}$ under~consideration by the CI~\eqref{eqn:inequality_03} is equivalent to \eqref{eqn:inequality_Y} by using the definition of $\boldsymbol{X}$ and applying the fact $t = \epsilon \hspace{0.25mm} \phi = \gamma \hspace{0.25mm} r \hspace{0.25mm} \epsilon / \rho$.
\end{proof}

{\emph{Proof of Theorem 1:}} In this proof, we employ Lemma~\ref{lemma:inequality_Y} to derive CI \eqref{eqn:inequality_theorem_Z_idd_c4}. The proof of \eqref{eqn:inequality_theorem_Z_idd_c5} is omitted since the derivation is similar to that of \eqref{eqn:inequality_theorem_Z_idd_c4}. This proof references the quantities $\mathcal{I}_{\tilde{y}}$ and $\tilde{Y}$ of Lemma~\ref{lemma:inequality_X} since they are referenced in Lemma~\ref{lemma:inequality_Y}.

First, we define $\boldsymbol{Y}$ as the following, $\boldsymbol{Y} := \tilde{Z}^{+/2} \boldsymbol{Z} \, \tilde{Z}^{+/2}$. This definition bridges the connection between the random variables $\boldsymbol{Y}$ and $\boldsymbol{Z}$ of Lemma~\ref{lemma:inequality_Y} and Theorem~\ref{theorem:inequality_Z}, respectively.

Next, we define $\mathcal{I}_{\tilde{z}} := \tilde{Z} \tilde{Z}^{+}$ and $\tilde{Z} := \mathbb{E}[ \boldsymbol{Z} ]$ for random variable $\boldsymbol{Z}$. We define $\tilde{Z}^{+/2} := ( \tilde{Z}^{+} )^{1/2}$ for brevity. Also, observe that the quantity $\mathcal{I}_{\tilde{z}}$ can be factorized as $\mathcal{I}_{\tilde{z}} = Q_{\tilde{z}} \ \mathrm{diag}[ \ I_{\mathrm{rank}(\tilde{Z})} \, , \, 0_{( d - \mathrm{rank}(\tilde{Z}) )} \ ] \ Q_{\tilde{z}}^T$, where $Q_{\tilde{z}}$ is an orthogonal matrix. This is a consequence of the spectral decomposition theorem. Also, observe that $\tilde{Z}$ can be factorized as $\tilde{Z} = Q_{\tilde{z}} \, D_{\tilde{z}} \, Q_{\tilde{z}}^T$, where $D_{\tilde{z}}$ is a diagonal matrix. The~following statements,
\begin{gather}
\label{eqn:fact_01}
\mathcal{I}_{\tilde{z}} = \tilde{Z}^{+/2} \tilde{Z} \hspace{0.25mm} \tilde{Z}^{+/2} = \tilde{Z} \tilde{Z}^{+} \tilde{Z} \tilde{Z}^{+}, \\
\label{eqn:fact_02}
\mathcal{I}_{\tilde{z}} = \tilde{Z}^{1/2} \tilde{Z}^{+/2} = \tilde{Z}^{+/2} \tilde{Z}^{1/2}, \\
\label{eqn:fact_03}
\mathcal{I}_{\tilde{y}} = \tilde{Z}^{+/2} \tilde{Z} \hspace{0.25mm} \tilde{Z}^{+/2},
\end{gather}
are due to the application of the pseudo-inverse to symmetric matrices. Since $\tilde{Z}$ is symmetric, the following also holds, $\tilde{Z}^{+} = ( Q_{\tilde{z}} \, D_{\tilde{z}} \, Q_{\tilde{z}}^T )^{+} = Q_{\tilde{z}} D_{\tilde{z}}^{+} Q_{\tilde{z}}^T$. Note that \eqref{eqn:fact_03} follows from repeatedly employing the fact \eqref{eqn:fact_01} and using the definitions of $\tilde{Y}$, $\boldsymbol{Y}$ and $\tilde{Z}$.

Next, we show that \eqref{eqn:inequality_R} of Theorem~\ref{theorem:inequality_Z} implies \eqref{eqn:Y_rho}. We prove this by first pre- and post- multiplying \eqref{eqn:inequality_R} by $\tilde{Z}^{+/2}$. Then, we use the definitions of $\boldsymbol{Y}$ and $\tilde{Z}$ to simplify the former. We conclude by employing \eqref{eqn:fact_03} and $\mathcal{I}_{\tilde{y}} \preceq I_d$.

Next, we show that \eqref{eqn:inequality_L_Y_I} is implied by the assumptions of Theorem~\ref{theorem:inequality_Z} as a consequence of the definition of $\boldsymbol{Y}$. We prove this by first recognizing that the following semi-definite inequality, $\zeta \, \tilde{Z} \preceq \tilde{Z} \preceq \tilde{Z}$, is true since $\zeta \in [0,1]$. Next, we pre- and post- multiply the inequality by $\tilde{Z}^{+/2}$. Then, we use the definitions of $\boldsymbol{Y}$ and $\tilde{Z}$ to simplify the former. We conclude by employing the fact \eqref{eqn:fact_03}.

Next, observe that the events \eqref{eqn:inequality_Y} and \eqref{eqn:inequality_Y_RHS} each occur with probability $( 1 - \delta )$ since the assumptions of Theorem~\ref{theorem:inequality_Z} imply the inequalities \eqref{eqn:Y_rho} and \eqref{eqn:inequality_L_Y_I} of Lemma~\ref{lemma:inequality_Y}.

Next, we show that \eqref{eqn:inequality_Y} implies the event under consideration by the CI \eqref{eqn:inequality_theorem_Z_idd_c4} of Theorem~\ref{theorem:inequality_Z}. We prove this by first using the definition of $\boldsymbol{Y}$ and expressing \eqref{eqn:inequality_Y} in terms of $\tilde{Z}$ and $\mathbf{Z}$. Next, we pre- and post- multiply the former by $\tilde{Z}^{1/2}$ and employ the fact~\eqref{eqn:fact_02}. We conclude by employing the facts $\mathbb{E}[ \boldsymbol{Z} ] = \mathcal{I}_{\tilde{z}} \, \mathbb{E}[ \boldsymbol{Z} ] \, \mathcal{I}_{\tilde{z}}$ and $\boldsymbol{Z}_i = \mathcal{I}_{\tilde{z}} \, \boldsymbol{Z}_i \, \mathcal{I}_{\tilde{z}}$. We refer the reader to Section~\ref{subsection:properties} for the definition of $\boldsymbol{Z}_i$. Note that $\boldsymbol{Z}_i = \mathcal{I}_{\tilde{z}} \, \boldsymbol{Z}_i \, \mathcal{I}_{\tilde{z}}$ since $\mathrm{im}( \boldsymbol{Z}_i ) \subseteq \mathrm{im}( \tilde{Z} )$. This completes the proof of Theorem~\ref{theorem:inequality_Z}.

Finally, we show that the assumption $\rho \geq 1$ is necessary. First, observe that satisfying \eqref{eqn:inequality_R} almost surely is equivalent to satisfying the inequality $\rho \, \mathbb{E}[ \boldsymbol{Z} ] - \mathcal{Z}_i \succeq 0_d$ for all $i \in [ \eta ]$. Next, we multiply $p_i$ to both sides of the $i$-th inequality for all $i \in [ \eta ]$. We conclude by summing the $\eta$ inequalities into one and observing that the assumption $\rho \geq 1$ is necessary.



\end{document}